\def\BibTeX{{\rm B\kern-.05em{\sc i\kern-.025em b}\kern-.08em
    T\kern-.1667em\lower.7ex\hbox{E}\kern-.125emX}}
\newcolumntype{C}[1]{>{\centering\arraybackslash}p{#1}}
\definecolor{italyGreen}{RGB}{0, 146, 70}
\definecolor{italyRed}{RGB}{206, 43, 55}
\pgfplotsset{
    discard if not/.style 2 args={
        x filter/.code={
            \edef\tempa{\thisrow{#1}}
            \edef\tempb{#2}
            \ifx\tempa\tempb
            \else
                
            \fi
        }
    }
}
\newcommand{\boxPlotWidth}{0.33em}
\pgfplotsset{
    box plot/.style={
        /pgfplots/.cd,
        only marks,
        mark=-,thick,
        mark size=\boxPlotWidth,
        /pgfplots/error bars/.cd,
        y dir=plus,
        y explicit,
    },
    box plot box/.style 2 args={
        /pgfplots/error bars/draw error bar/.code 2 args={%
            \draw[thick]  ##1 -- ++(\boxPlotWidth,0pt) |- ##2 -- ++(-\boxPlotWidth,0pt) |- ##1 -- cycle;
        },
        /pgfplots/table/.cd,
       y={#2},
       y error expr={\thisrow{#1}-\thisrow{#2}},
       /pgfplots/box plot
   },
   box plot top whisker/.style 2 args={
       /pgfplots/error bars/draw error bar/.code 2 args={%
           \pgfkeysgetvalue{/pgfplots/error bars/error mark}%
           {\pgfplotserrorbarsmark}%
           \pgfkeysgetvalue{/pgfplots/error bars/error mark options}%
           {\pgfplotserrorbarsmarkopts}%
            \path[thick] ##1 -- ##2;
      },        /pgfplots/table/.cd,
        y={#1},
        y error expr={\thisrow{#2}-\thisrow{#1}},
        /pgfplots/box plot
    },
   box plot bottom whisker/.style 2 args={
       /pgfplots/error bars/draw error bar/.code 2 args={%
            \pgfkeysgetvalue{/pgfplots/error bars/error mark}%
           {\pgfplotserrorbarsmark}%
          \pgfkeysgetvalue{/pgfplots/error bars/error mark options}%
            {\pgfplotserrorbarsmarkopts}%
            \path[thick] ##1 -- ##2;
        },
        /pgfplots/table/.cd,
        y={#1},
        y error expr={\thisrow{#2}-\thisrow{#1}},
        /pgfplots/box plot
    },
    box plot median/.style 2 args={
        /pgfplots/table/.cd,
        y={#1},
        /pgfplots/box plot
   }
}
\newtheorem{theorem}{Theorem}
\newtheorem{lemma}{Lemma}
\newtheorem{observation}{Observation}
\newtheorem{proposition}{Proposition}
\theoremstyle{definition}
\newtheorem{definition}{Definition}
\newtheorem{example}{Example}
\newtheorem{transformation}{Transformation}
\newcommand{\tempG}{$\mathcal G = (V,E,T, \alpha, \beta)$}
\newcommand{\tempGT}{$\mathcal G = (V,E,T, \alpha, \beta)$}
\newcommand{\tempGTna}{$\mathcal G = (V,E,T, \beta)$}
\DeclareMathOperator{\opt}{opt}
\DeclareMathOperator{\with}{with}
\DeclareMathOperator{\notempty}{ not~empty}
\DeclareMathOperator{\val}{lin}
\DeclareMathOperator{\valp}{lin^T}
\DeclareMathOperator{\ind}{ind}
\DeclareMathOperator{\alp}{A}
\begin{document}
\title{Efficient Computation of Optimal Temporal Walks under Waiting-Time Constraints}
\author{Anne-Sophie Himmel\thanks{Supported by the DFG, projects FPTinP (NI 369/16).}}
\author{Matthias Bentert}
\author{Andr\'e Nichterlein}
\author{Rolf Niedermeier}
\affil{\small Algorithmics and Computational Complexity, Faculty IV, TU~Berlin, Berlin, Germany,

\texttt{\{anne-sophie.himmel, matthias.bentert, andre.nichterlein, rolf.niedermeier\}@tu-berlin.de}}
%
\date{}
\maketitle

\begin{abstract} 
Node connectivity plays a central role in temporal network analysis.
We provide a comprehensive study of various concepts
of walks in temporal graphs, that is, graphs with fixed vertex sets but 
edge sets changing over time. 
Taking into account the temporal aspect leads to a rich set of optimization criteria for ``shortest''
walks. Extending and significantly broadening state-of-the-art work of Wu et al.~[IEEE~TKDE~2016],
we provide an
algorithm for computing optimal walks that is capable to deal with various optimization criteria and 
any linear combination of these. It runs in $O (|V| + |E| \log |E|)$ time where $|V|$ is the number of vertices and $|E|$ is the number of time edges.
A central distinguishing factor to Wu et al.'s work is that 
our model allows to, motivated by real-world applications, respect waiting-time constraints for vertices, 
that is, the minimum and maximum waiting time allowed in intermediate vertices of a walk. 
Moreover, other than Wu et al.\ our algorithm also allows to search for walks that pass multiple subsequent edges in one time step,  
and it can optimize a richer set of optimization criteria. Our experimental studies 
indicate that our richer modeling can be achieved without significantly worsening the running time when compared to Wu et al.'s algorithms.
\end{abstract}

\section{Introduction}
\label{introduction}
Computing shortest paths in networks is arguably among the most important graph 
algorithms, relevant in numerous application contexts and being used as a subroutine 
in a highly diverse set of applications. 
While the case has been studied in static graphs for decades, over the last years 
there has been an intensified interest in studying shortest 
path computations in \emph{temporal graphs}---graphs where the vertex set remains static, but the edge set may change over 
(discrete) time.

Two natural motivating examples for the relevance of path and walk (which can visit a vertex multiple times) computations in temporal 
graphs are as follows. First, Wu et al.~\cite{wu2016temporalpath} discuss applications in 
flight networks where every node represents an airport and each edge is labeled with a flight's departure time.
Clearly, a ``shortest'' path may then relate to a most convenient flight connection between two cities.
Second, understanding the spread of infectious diseases is a major challenge to global health. 
Herein, nodes represent persons and time-labeled edges represent contacts between 
persons where say a virus can be transmitted.  ``Shortest'' path (walk) analysis here may help us 
(among other concepts of connectivity) to find measures against disease spreading \cite[Chapter 17]{New18}.
Notably, in both examples one might need to also take into account issues such as different concepts of 
``shortest''---also called optimal---paths (walks) or waiting times in nodes; this will be an important aspect of our 
modeling. 

Our main reference point is the work of Wu et al.~\cite{wu2016temporalpath} on efficient algorithms 
for optimal \emph{temporal path} computation. These are also implemented in the temporal graph library of Apache 
Flink~\cite{Lightenberg:2018:TTG:3184558.3186934}. 
We extend their model with respect to two aspects.
First, we additionally consider waiting-time constraints\footnote{Waiting-time constraints play a particularly important role in standard spreading models of infectious diseases such as the SIS-model (Susceptible-Infected-Susceptible)~\cite[Chapter 17]{New18}.} 
for the network nodes; 
importantly, maximum-waiting-time constraints can enforce cycles from one node to another; 
refer to \autoref{figure:beta} for a simple example. 
Hence, we need to take into account optimal \emph{temporal walks} from one node to another
(in Wu et al.'s model without waiting-time constraints 
there is always an (optimal) temporal path visiting no node twice because no cycles are necessary).
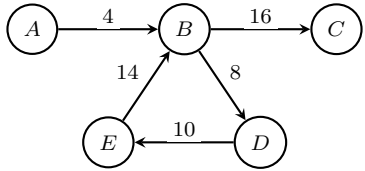
\begin{figure}
\centering
\begin{tikzpicture}
\begin{scope}[every node/.style={circle,thick}]
    \node (A) at (0,0) [draw=black] {\scriptsize $A$};
    \node(B) at (2,-0)[draw=black] {\scriptsize $B$};
    \node (E) at (4,0) [draw=black] {\scriptsize $C$};
    \node (C) at (3,-1.5)[draw=black] {\scriptsize $D$};
    \node (D) at (1,-1.5)[draw=black] {\scriptsize $E$};
\end{scope}
\begin{scope}[>={Stealth[black]},
              every node/.style={ fill=white, above, inner sep=2pt, align=left},
              every edge/.style={draw=black, thick}]
    \path [->,>=stealth] (A) edge node {\scriptsize $4$} (B);
    \path [->,>=stealth] (B) edge node[xshift=5pt] {\scriptsize $8$} (C);
    \path [->,>=stealth] (C) edge node {\scriptsize $10$} (D);
    \path [->,>=stealth] (D) edge node[xshift=-7pt] {\scriptsize $14$} (B);
    \path [->,>=stealth] (B) edge node {\scriptsize $16$} (E);
\end{scope}
\end{tikzpicture}
	\caption{A temporal graph (with time-labeled arcs) 
	with maximum waiting time four in which the only temporal walk from $A$ to $C$ visits $B$ twice.}
\label{figure:beta}
\end{figure}
Actually, if one insists on paths (without repeated nodes) instead of walks, then even deciding whether there exists a path between two nodes becomes NP-hard \cite{restlessTempPathArxiv19}.
The second extension to Wu et al.'s work lies in an increased number of optimality criteria (different notions of optimal walks) and the fact that we do not only deal with optimizing one criterion but a linear combination of any of these, thus addressing richer modeling needs in real-world applications.
For example, in the above-mentioned flight network a traveling person might want to use our algorithm to optimize a linear combination of cost and travel duration. 
Note that trying to find walks under constraints (e.g., travel duration at most~$t$ \emph{and} cost at most~$c$ for given~$c,t \in \mathds N$) leads to NP-hard computational problems, even in the static case~\cite{AMO93}.


\paragraph{Related Work.} 
The theory of temporal graphs is a relatively young but active field of research \cite{Holme2015,holme2012temporal,holme2013temporal,holme2019temporal}.
Many of the basic concepts of temporal graphs such as temporal connectivity~\cite{kempe2000conn,nicosia2012components,axiotis2016size,mer13} or $s$-$z$-separation~\cite{ZFMN20,FLUSCHNIK2019} are based on the notion of the temporal paths and walks.
 The concept of optimal temporal walks plays a central role in the definition of temporal graph metrics such as eccentricity, diameter, betweenness and closeness centrality~\cite{Saramaki2011,santoro2011time,kim2012}.
 %
 %
%

An early algorithm for computing optimal temporal walks is due to Xuan et al.~\cite{xuan2003computing}. 
They computed temporal walks under different optimization criteria, namely \textit{foremost}, \textit{fastest}, and \textit{minimum hop-count}\footnote{Refer to the next section for definitions of these and further optimality criteria.} for a restricted variant of temporal graphs.  
Wu et al.~\cite{wu2016temporalpath} followed up by introducing algorithms for computing optimal walks for the 
optimization criteria \textit{foremost}, \textit{reverse-foremost}, \textit{fastest}, and~\textit{shortest} on temporal graphs with no waiting-time constraints. Their algorithms run in linear and quasi-linear time with respect to the number of time-arcs, provided that transmission times on time-arcs are greater than zero. 
Concerning time-dependent multicriteria optimal path computation, there has been research in the related field of route planning~\cite{bast2016route}.

The study of minimum- and maximum-waiting-time constraints in vertices has not received much attention in the context of temporal walks even though they are considered as important extensions to the temporal walk model \cite{holme2012temporal,Saramaki2011}. 
One of the first papers to consider waiting-time constraints in the context of paths is the one by Dean~\cite{Dean2004AlgorithmsFM}. He showed that time-dependent shortest path problems with waiting constraints can be solved in polynomial time which is closely related to waiting-time constraints for optimal temporal walks. 
In more recent work, Modiri et al.~\cite{modiri2019efficient} and Kivel\"a et al.~\cite{kivela2018mapping} studied the changes in reachability when introducing maximum-waiting time constraints for temporal walks using so-called \emph{event graphs}. 
Lastly, Casteigts et al.~\cite{restlessTempPathArxiv19} have recently shown that finding optimal temporal path under maximum-waiting-time constraints is NP-hard.

In the so-called multistage setting, which is closely related to temporal graphs, paths have been studied by Fluschnik et al.~\cite{fluschnikundzschoche}.

\paragraph{Our Contributions.}
We analyze the running time complexity of computing optimal temporal walks under waiting-time constraints. 
We develop and (theoretically and empirically) analyze an algorithm for finding an optimal walk from a 
source vertex to each vertex in the temporal graph under waiting-time constraints. 
Our algorithm runs in quasi-linear time in the number of time-arcs.  
This implies that the introduction of waiting-time constraints on temporal walks does not 
increase the asymptotic computational complexity of finding optimal temporal walks.   
Moreover, our algorithm can compute optimal walks not only for single optimality criteria 
but also for any linear combination of these.  
In experiments on real-world social network data sets, we demonstrate that in terms of 
efficiency our algorithm can compete with 
state-of-the-art algorithms by Wu et al.~\cite{wu2016temporalpath}. Their algorithms only run on temporal graphs 
without waiting-time constraints, which do optimize only one criterion (and not a linear combination).
Additionally, our algorithm allows transmission times being zero (hence, allowing to pass multiple arcs in one time step) while the algorithms of Wu et al.~request transmission times on arcs to be greater than zero.
Lastly, we analyze how different values of maximum waiting times influence the reachability within the temporal graph and how they influence the structure of optimal temporal walks in context of the existence of cycles.  
\paragraph{Organization of the Paper.}
In \cref{sec:modeling}, we discuss temporal graphs, temporal walks, and the 
various corresponding  optimality criteria, 
starting with an extensive motivating example in the context of disease spreading.
In \cref{sec:prelim}, we introduce definitions and notations used throughout the paper. 
We continue in \cref{sec:trans} by presenting two simple linear-time transformations to eliminate 
transmission times and minimum waiting times in the temporal graph 
without loosing any modeling power.
In \cref{sec:algorithm}, we design and analyze an algorithm for computing optimal walks 
under maximum-waiting-time for any linear combination of these. 
Finally, in \cref{sec:exp}, we demonstrate the efficiency of our algorithm on real-world data sets. 
We compare our running times with the running times of the algorithms of Wu et al.~\cite{wu2016temporalpath}. 
We further examine the impact of different maximum-waiting-time values on the existence and structure of optimal temporal walks.

\section{Modeling of Optimal Temporal Walks}
\label{sec:modeling}
Before we introduce our basic concepts relating to temporal graphs and walks, we start 
with a more extensive discussion of a motivating example from the disease spreading context.

\paragraph*{Disease Spreading Motivating Example.}
Pandemic spread of an infectious disease is a great threat to global health, potentially associated with high mortality rates as well as economic fallout~\cite{Salath22020}.
Understanding the dynamics of infectious disease spread within human proximity networks could facilitate the development of mitigation strategies. 

A large part of the legwork required to understand the dynamics of infectious diseases is the analysis of transmission routes through proximity networks~\cite{Salath22020}.  
Classical graph theory can be used to model the main structure of a network: 
Each person in the network is represented by a node and an bi-directional arc between two nodes indicates at least one proximity contact between these persons. 
However, the time component plays a crucial role in the analysis of transmission routes of a potential disease as shown in the following example:
\begin{example}
Studying a proximity network as shown in~\cref{figure:motiv1}, there are several transmission routes from~$A$ to~$D$, e.g.,\ $A\rightarrow B \rightarrow D$ and $A\rightarrow C \rightarrow D$, by which a disease could have spread. 
If we extend our model by the points in time of proximity contacts in~\cref{figure:motiv2} to \cref{figure:motiv4}, then we reach the conclusion that a disease could not have spread from~$A$ to~$D$. 
The proximity contacts~$A\overset{3}{\rightarrow} B$ and~$A \overset{3}{\rightarrow} C$ occurred on day three whereas the contacts~$B \overset{1}{\rightarrow} D$ and~$C \overset{2}{\rightarrow} D$ occurred on days one and two, respectively. 
Thus, $A$ could have only infected $B$ and $C$ after proximity contact with~$D$. 
\label{example:1}
\end{example}
In addition to what has been said so far, the infectious period of a disease also has to be taken into account when computing potential transmission routes through the network, implying the minimum time a person has to be infected before she becomes contagious herself and the maximum time a person can be infected before she is no longer contagious:

\begin{example}
If person~$B$ was infected by person~$A$ on day four ($A \overset{4}{\rightarrow} B$) and the infectious period of the disease starts after one day and ends after the fourth day, then person~$B$ could not have infected person~$C$ she met on day ten ($B \overset{10}{\rightarrow} C$). 
Hence, person~$C$ could not have been infected by the disease via the transmission route $A \overset{4}{\rightarrow} B \overset{10}{\rightarrow}C$.  
\label{example:2}
\end{example}

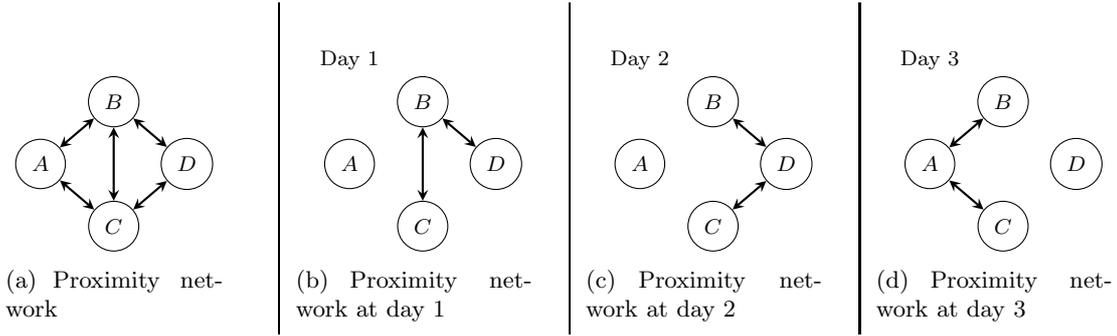
\begin{figure}[t]
\begin{tabular}{ b{.23\textwidth} | b{.23\textwidth}  |b{.23\textwidth}  | b{.23\textwidth} }
{\centering
\subfigure[Proximity~network]{
\begin{tikzpicture}[scale=.55]
\begin{scope}[every node/.style={circle}]
    \node (A) at (-1.75,0)[draw=black] {\scriptsize $A$};
    \node (B) at (0,1.5)[draw=black] {\scriptsize $B$};
    \node (C) at (0,-1.5)[draw=black] {\scriptsize $C$};
    \node (D) at (1.75,0)[draw=black] {\scriptsize $D$};
\end{scope}
\begin{scope}[>={Stealth[black]},
              every node/.style={ fill=white,above, inner sep=2pt, align=left},
              every edge/.style={draw=black, thick}]
    \path [<->,>=stealth] (A) edge (B);
    \path [<->,>=stealth] (A) edge (C);
    \path [<->,>=stealth] (B) edge (C);
    \path [<->,>=stealth] (B) edge (D);
    \path [<->,>=stealth] (C) edge (D);
\end{scope}
\label{figure:motiv1}
\end{tikzpicture}
}} &
{\centering
\subfigure[Proximity network at day $1$]{
\begin{tikzpicture}[scale=.55]
\begin{scope}[every node/.style={circle}]
    \node (t) at (-1.75,2.5)[draw=white] {\scriptsize{Day $1$}};
    \node (A) at (-1.75,0)[draw=black] {\scriptsize $A$};
    \node (B) at (0,1.5)[draw=black] {\scriptsize $B$};
    \node (C) at (0,-1.5)[draw=black] {\scriptsize $C$};
    \node (D) at (1.75,0)[draw=black] {\scriptsize $D$};
\end{scope}
\begin{scope}[>={Stealth[black]},
              every node/.style={ fill=white,above, inner sep=2pt, align=left},
              every edge/.style={draw=black, thick}]
    \path [ <->,>=stealth] (B) edge  (C);
    \path [<->,>=stealth] (B) edge  (D);

\end{scope}
\label{figure:motiv2}
\end{tikzpicture}
} } &
{\centering
\subfigure[Proximity network at day $2$]{
\begin{tikzpicture}[scale=.55]
\begin{scope}[every node/.style={circle}]
    \node (t) at (-1.75,2.5)[draw=white] {\scriptsize{Day $2$}};
    \node (A) at (-1.75,0)[draw=black] {\scriptsize $A$};
    \node (B) at (0,1.5)[draw=black] {\scriptsize $B$};
    \node (C) at (0,-1.5)[draw=black] {\scriptsize $C$};
    \node (D) at (1.75,0)[draw=black] {\scriptsize $D$};
\end{scope}
\begin{scope}[>={Stealth[black]},
              every node/.style={ fill=white,above, inner sep=2pt, align=left},
              every edge/.style={draw=black, thick}]
    \path [<->,>=stealth] (B) edge (D);
    \path [<->,>=stealth] (C) edge (D);
\end{scope}
\label{figure:motiv3}
\end{tikzpicture}
}} &
{\centering
\subfigure[Proximity network at day $3$]{
\begin{tikzpicture}[scale=.55]
\begin{scope}[every node/.style={circle}]
    \node (t) at (-1.75,2.5)[draw=white] { \scriptsize{Day $3$}};
    \node (A) at (-1.75,0)[draw=black] {\scriptsize $A$};
    \node (B) at (0,1.5)[draw=black] {\scriptsize $B$};
    \node (C) at (0,-1.5)[draw=black] {\scriptsize $C$};
    \node (D) at (1.75,0)[draw=black] {\scriptsize $D$};
\end{scope}
\begin{scope}[>={Stealth[black]},
              every node/.style={ fill=white,above, inner sep=2pt, align=left},
              every edge/.style={draw=black, thick}]
    \path [<->,>=stealth] (A) edge (B);
    \path [<->,>=stealth] (A) edge (C);
\end{scope}
\label{figure:motiv4}
\end{tikzpicture}
}}
\end{tabular}
\caption{A proximity network modeled as a static graph~(\cref{figure:motiv1}) and a closer look at the days in which the proximity contacts appear~(\cref{figure:motiv2,figure:motiv3,figure:motiv4}).}
\label{figure:motiv}
\end{figure}
\paragraph{Temporal Graphs.} 
These are capable of representing both properties elaborated in the two examples above. Temporal graphs are already a frequently used model in the prediction and control of infectious diseases~\cite{masuda2013predicting, holme16}.
Temporal graphs---also referred to as temporal networks~\cite{nicosia2013graph, holme2012temporal}, evolving graphs~\cite{xuan2003computing}, or time-varying graphs~\cite{santoro2011time,casteigts2012time}---are graphs where the arc set changes over time\footnote{We always assume that all changes over time are given as input.}; thus, they can capture the dynamics within a proximity network. 

In this paper, we will consider the following temporal graph model: A temporal graph consists of a \emph{lifetime}, a \emph{set of vertices}, and a \emph{set of time-arcs}. A time-arc is a directed edge between two vertices that is associated with a \emph{time stamp} at which the contact occurs and a \emph{transmission time} that indicates the amount of time to traverse the arc. Furthermore, each vertex~$v$ exhibits an individual \emph{minimum waiting time} $\alpha(v)$ and \emph{maximum waiting time} $\beta(v)$ that can reflect the infectious period in our previous example.  

The application areas of temporal graphs are numerous: In addition to human and animal proximity networks, they are used in communication networks, traffic networks, and distributed computing, to name a few application areas~\cite{holme2012temporal,holme2013temporal,holme2019temporal}.

In our running disease spreading example, we are interested in transmission routes of an infectious disease. 
These transmission routes can revisit a person in the proximity network due to possible reinfection \cite{Bar16}. 
Hence, the transmission routes can contain cycles which needs to be considered in the choice of concepts representing these routes.      
\paragraph{Temporal Walks \& Optimal Temporal Walks.} 
Within the temporal graph model, temporal walks---also called journeys \cite{xuan2003computing,nicosia2013graph}---are the fundamental concept that represents the transmission routes in our running example.  

 A temporal walk is a sequence of time-arcs which connects a sequence of vertices and which are non-decreasing in time. In our model, a temporal walk additionally ensures that it remains the minimum waiting time in each intermediate vertex and does not exceed the maximum waiting time in any intermediate vertex of the walk. 
\begin{example}
Continuing \cref{example:2}, a valid temporal walk (transmission route) from $A$ to $D$ could be the following: $A \overset{4}{\rightarrow} B \overset{8}{\rightarrow} D$. Person $A$ could have infected $B$ on day four. 
Due to the infectious period of four days, $B$ was still contagious on day eight when she had contact with person $C$. 
This does not hold on a route~$A \overset{4}{\rightarrow} B \overset{10}{\rightarrow} C$ as discussed in \cref{example:2}. 
If $B$ was infected at time step $4$,  then she was not contagious anymore at time step $10$. 
\end{example}

A \emph{temporal path} is a temporal walk where all vertices are pairwise distinct. 
Maximum-waiting-time constraints have significant impact on temporal walks.
In a temporal graph with constraints on the maximum waiting time, one can be forced to make detours because the maximum waiting time in a vertex is exceeded.
As a consequence, there can be two vertices $A$ and~$C$ such that any temporal walk from $A$ to~$C$ is not a path, as shown in \autoref{figure:beta}.
\begin{observation}
Let \tempGTna~be a temporal graph with maximum-waiting-time constraints.
Then there can exist two vertices~$s,z\in V$ such that each $s$-$z$-walk is not a temporal path.
 \label{observation:beta}
\end{observation}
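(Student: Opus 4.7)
The plan is to establish the observation by exhibiting a single concrete counterexample, namely the five-vertex temporal graph $\mathcal{G}$ depicted in \autoref{figure:beta}, taken with source $s = A$, target $z = C$, and uniform maximum waiting time $\beta(v) = 4$ for every vertex~$v$. The proof then reduces to two verifications: that at least one valid $A$-$C$-walk exists in $\mathcal{G}$, and that every such walk must repeat the vertex~$B$, so that no $A$-$C$-walk is a temporal path.

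For existence of a walk, I would exhibit the candidate $A \xrightarrow{4} B \xrightarrow{8} D \xrightarrow{10} E \xrightarrow{14} B \xrightarrow{16} C$ and check the waiting time at each intermediate stop against the bound $\beta \equiv 4$: the gaps $8-4,\,10-8,\,14-10,\,16-14$ equal $4,2,4,2$, all within budget, so the sequence is a valid temporal walk under the maximum-waiting-time constraint.

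For non-existence of a temporal path, I would argue by a short forcing chain driven by the scarce outgoing time-arcs at each vertex. The only time-arc entering $C$ is $B \to C$ at time~$16$, so any $A$-$C$-walk ends with this arc and must therefore be located at $B$ at some moment in $[12,16]$. Symmetrically, the only time-arc leaving $A$ is $A \to B$ at time~$4$, so every such walk begins by arriving at $B$ at time~$4$. Since $16 - 4 = 12 > 4 = \beta(B)$, the walk cannot simply wait at $B$ until time~$16$; it must leave $B$ via the only remaining outgoing arc, $B \to D$ at time~$8$. From $D$ the only outgoing arc is $D \to E$ at time~$10$, and from $E$ the only outgoing arc back into the relevant time window is $E \to B$ at time~$14$. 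Hence the walk is forced through the vertex sequence $A, B, D, E, B, C$, visiting~$B$ twice and thus failing to be a temporal path.

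There is no genuine mathematical obstacle here; the only care required is to ensure that each step of the forcing chain is indeed the unique option consistent with both the $\beta$-constraint and the list of available outgoing arcs, and that the arrival/departure times satisfy $\le 4$ at every intermediate vertex. Since the statement is purely existential, producing and verifying this single example suffices.
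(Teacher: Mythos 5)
Your proposal is correct and matches the paper's own justification, which is precisely the example of \autoref{figure:beta} (arcs $A\xrightarrow{4}B\xrightarrow{8}D\xrightarrow{10}E\xrightarrow{14}B\xrightarrow{16}C$ with $\beta\equiv 4$); you merely spell out the existence and uniqueness-forcing verifications that the paper leaves implicit. No gaps.
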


We are interested in temporal walks within our proximity network in general, but wish to place emphasis on temporal walks that optimize certain properties. 
A plethora of criteria can be optimized as a consequence of the time aspect. 
Possible criteria (with the names we chose or were chosen in literature in brackets) include: arrival time~(\textit{foremost}), departure time~(\textit{reverse-foremost}), duration~(\textit{fastest}), transmission time~(\textit{shortest}), number of time-arcs~(\textit{minimum hop-count}), time-arc cost~(\textit{cheapest}), 
probability~(\textit{Most-likely}), and waiting time~(\textit{minimum waiting time}). 
Next, we provide examples for all properties from their respective fields of application.
%
\begin{description}
\item[\textit{Foremost.}] A foremost walk is a temporal walk that has the earliest arrival time possible. 
Computing a foremost walk from a source vertex to all vertices in the proximity network signifies the speed with which an infectious disease could spread. 
 
\item[\textit{Reverse-Foremost.}] A reverse-foremost walk is a temporal walk that exhibits the latest possible departure time. 
Computing a reverse-foremost walk from a source vertex to all vertices in the proximity network estimates the latest possible point in time at which an infectious disease could start spreading and still permeates the entire network. 

\item[\textit{Fastest.}] A fastest walk is a temporal walk which exhibits the minimum duration, that is, the minimum difference between departure and arrival times. 
For an appropriate motivation, we leave proximity networks and consider the field of flight networks. 
Airports represent vertices, time-arcs represent flights from one airport to another. 
The time stamp indicates the departure time of a flight, the transmission time indicates the duration. 
The minimum waiting time in the vertices signifies the minimum time required in an airport to catch a connecting flight. 
Within flight networks, the duration is often the criterion passengers aim to minimize in order to streamline their journey. 
    
\item[\textit{Shortest.}] A shortest walk is a temporal walk that minimizes the sum of transmission times on the time-arcs. 
In the context of flight networks, a shortest walk is a flight connection with the minimum time spent airborne. 

\item[\textit{Minimum Hop-Count.}] A minimum-hop-count walk is a temporal walk which minimizes the number of time-arcs. 
Within a flight network, passengers also aim to minimize their number of connecting flights to avoid lengthy boarding procedures and the risk of missing connecting flights.

\item[\textit{Cheapest.}] For a given cost function on the time-arcs, a cheapest walk is a temporal walk with the minimum sum of costs over all time-arcs. 
The benefits of the minimization of this property within flight networks are obvious: Weighing long travel times and multiple connections against the cheapest fare is the oldest consideration in the book for many air travelers.  
 
\item[\textit{Most-Likely.}] For given probabilities on the time-arcs, a most-likely walk is a temporal walk with the highest probability. 
One application lies in disease spreading: For every contact there is a certain likelihood for an infectious disease to be transmitted depending on the proximity of the persons or the body contact between them. Thus, a most-likely walk is a transmission route with the highest probability for the infectious disease to be spread under the assumption of stochastic independence. The respective probabilities of the time-arcs within the walk are multiplied.    

\item[\textit{Minimum Waiting Time.}] 
The minimum-waiting-time walk is a temporal walk that has minimum sum of waiting times over all intermediate vertices. 
Routing packets through a router network prioritizes minimum waiting times of packages in the routers to improve the overall performance of the network.  
\end{description}


\section{Formal Definitions}
\label{sec:prelim}
In this section, we formally introduce the most important concepts related to temporal graphs, temporal walks, and formalize our optimality criteria. 
We start with some basic mathematical definitions.
We refer to an interval~$[a,b]$ as a contiguous ordered set of discrete time steps:
$[a,b]:=\{n \mid n \in \mathds N \wedge a \leq n \leq b \},$ where~$a,b \in \mathds N$. Further, let~$[a]:=[1,a]$. Given a function $f \colon A \rightarrow B$, we write $f \equiv c$ if  $f(a) = c$ for all $a \in A$ and $c \in B$.

\paragraph{Temporal Graph.}
A temporal graph is a graph whose edge set changes over time. 
\begin{definition}[Temporal Graph]
 A \emph{temporal graph}~$\mathcal G = (V,E,T, \alpha, \beta)$ is a five-tuple consisting of 
 \begin{itemize}
 \item a lifetime $T \in \mathds N$, 
 \item a vertex set~$V$, 
 \item a time-arc set~$E \subseteq V \times V \times \{1,\ldots,T\} \times \{0,\ldots,T\}$, 
 \item a minimum waiting time $\alpha \colon V \rightarrow \{0,\ldots,T\}$, and 
 \item a maximum waiting time $\beta \colon V \rightarrow \{0,\ldots,T\}$.
 \end{itemize}
 \end{definition}
A time-arc~$(v,w, t, \lambda) \in E$ is a directed connection from~$v$ to~$w$ with \emph{time stamp}~$t$ and \emph{transmission time}~$\lambda$, that is, a transmission from~$v$ to~$w$ starting at time step~$t$ and taking~$\lambda$ time steps to cross the arc. The \emph{departure time} in vertex~$v$ is~$t$; the \emph{arrival time} in vertex~$w$ is then~$t+\lambda$. 
%
The two waiting-time functions $\alpha\colon V \rightarrow \mathds N$ and~$\beta \colon V \rightarrow \mathds N$ assign each vertex a minimum and maximum waiting time, respectively. The minimum waiting time~$\alpha(v)$ is the minimum time a person has to stay in a vertex~$v$ before she can move on in the temporal graph. The maximum waiting time~$\beta(v)$ is the maximum time a person can stay in a vertex $v$ before she is no longer allowed to move further in the graph.
A temporal graph~\tempGT~is called \emph{instantaneous} if $\alpha(v) = 0$ for all $v \in V$ and $\lambda=0$ for all~$(v,w,t,\lambda) \in E$. Then, for the ease of presentation, we neglect $\alpha$ and we write arcs as triples $(v,w,t) \in E$ for instantaneous graphs.
\begin{table}
\caption{Frequently used notation for a temporal graph $\mathcal G$.}
\begin{tabularx}{\columnwidth}{ c l }
	\toprule
    $V$ & the vertex set of~$\mathcal G$  \\ 
    $E$ &  the time-arc set of~$\mathcal G$ \\ 
    $\lbrack T \rbrack$ & the time interval of~$\mathcal G$ \\
    $\alpha$  &  the minimum waiting time with $\alpha: V \rightarrow \mathds N$; \\
    $\beta$  &  the maximum waiting time with $\beta: V \rightarrow \mathds N$ \\
    $V_t$  & the vertex subset $V_t \subseteq V$ at time $t$, that is, \\
    & $V_t := \{v  \mid (v,w,t,\lambda) \in E \vee (w,v,t,\lambda) \in E \}$ \\
    $E_t$  &  the time-arc subset at time $t$, that is, $E_t := \{(v,w) \mid (v,w,t,\lambda) \in E \}$ \\
    $G_t$  & the directed, static graph $G_t := (V_t,E_t)$ \\
	\bottomrule
\end{tabularx}
\label{table:notations}
\end{table}
In \cref{table:notations}, we introduce some notation for temporal graphs.
\paragraph{Temporal Walk.} 
A temporal walk is a walk in a temporal graph such that the time stamps of the visited time-arcs of a temporal walk are increasing in time. Additionally, the transmission time and the waiting-time constraints have to be taken into account. 
\begin{definition}[Temporal Walk]
Given a temporal graph \tempG~and two vertices $s,z \in V$, a \emph{temporal walk} from~$s$ to $z$ is a sequence $\left(\left (v_{i-1},v_i,t_i,\lambda_i \right)\right)_{i=1}^k$ of time-arcs such that $s= v_0$, $z = v_k$, and $t_i + \lambda_i + \alpha(v_{i}) \leq t_{i+1} \leq t_i + \lambda_i + \beta(v_{i})$ for all $i\in [k-1]$.
\end{definition}
A \emph{temporal path} is a temporal walk where all vertices are pairwise distinct. 
\paragraph{Optimal Temporal Walk.}
Due to the additional time aspect, there are several, potentially contradicting criteria that can be optimized in a temporal walk. 
We formally define the criteria that were already motivated in \cref{sec:modeling}. 

\begin{definition}[Optimal Temporal Walk]
Let \tempG~be a temporal graph, let~$c\colon E \rightarrow \mathds N$ be a cost function, and let $s,z \in V$ be two vertices. A temporal walk~$P=\left(\left (v_{i-1},v_i,t_i,\lambda_i \right)\right)_{i=1}^k$ from $s$ to $z$ is called \emph{optimal} if it minimizes or maximizes a certain value among all temporal walks from~$s$ to~$z$. We consider the following variants:
\begin{center}
	\renewcommand{\arraystretch}{1.2}
	\begin{tabularx}{\linewidth}{ r c l}
		\toprule
		criterion & min / max & optimization value \\
		\midrule
		foremost	& min	& $t_k + \lambda_k$		\\
		reverse-foremost	& max	& $t_1$		\\
		fastest 	& min	& $(t_k + \lambda_k) - t_1$	\\ 
		shortest 	& min	& $\sum_{i=1}^k \lambda_i$	\\ 
		cheapest	& min	& $\sum_{i=1}^k c((v_{i-1},v_i,t_i,\lambda_i))$ \\
		most-likely 	& max	&   $\prod_{i=1}^k c((v_{i-1},v_i,t_i,\lambda_i)) $	\\
		minimum hop-count  & min	& $k$	\\ 
		minimum waiting time		& min 	& $\sum_{i=1}^{k-1} t_{i+1} - (t_i + \lambda_i)$ \\
		\bottomrule
	\end{tabularx}
\end{center}
\end{definition}

Note that the \textit{most-likely} criterion can easily be transformed into \textit{cheapest}. 
For the \textit{most-likely} criterion, the cost values of the time-arcs represent probabilities, implying~$c(e) \in [0,1]$ for all~$e\in E$. 
Hence, maximizing $\prod_{i=1}^k c((v_{i-1},v_i,t_i,\lambda_i) $ is equivalent to minimizing $$\sum_{i=1}^k  -\log c((v_{i-1},v_i,t_i,\lambda_i))$$ of a temporal walk. 
Hence, we neglect considering the \textit{most-likely} criterion separately. 
%
We further call a temporal walk~$P=\left(\left (v_{i-1},v_i,t_i,\lambda_i \right)\right)_{i=1}^k$ from $s$ to $z$ an optimal temporal walk with respect to a \emph{linear combination} with~$\delta_1,\ldots,\delta_7 \in \mathds Q^+_0$ if it minimizes 
\begin{align*}
		\val(P) = ~&\delta_1 \cdot(t_k + \lambda_k) &\textit{Foremost}\\
	+ ~&\delta_2\cdot(-t_1) &\textit{Reverse-Foremost}\\
	+ ~&\delta_3\cdot(t_k  + \lambda_k - t_1) &\textit{Fastest} \\
	+ ~&\delta_4\cdot \sum_{i=1}^k \lambda_i &\textit{Shortest} \\
	+ ~ &\delta_5\cdot \sum_{i=1}^k c((v_{i-1},v_i,t_i,\lambda_i )) &\textit{Cheapest}\\
	+ ~&\delta_6\cdot k &\textit{Minimum Hop-Count}\\
	+ ~&\delta_7\cdot \sum_{i=1}^{k-1} (t_{i+1} - (t_i + \lambda_i))  &\textit{Minimum Waiting Time}  
\end{align*}
among all temporal walks from~$s$ to~$z$.

\section{Transformations}
\label{sec:trans}
To simplify the presentation of the forthcoming algorithm in \cref{sec:algorithm} for computing optimal temporal walks, we design it to run only on \emph{instantaneous} temporal graphs, that is, temporal graphs with no transmission times ($\lambda =0$ for all $(v,w,t,\lambda) \in E$) and no minimum-waiting-time constraints~($\alpha(v) = 0$ for all $v\in V$). 
This is no restriction since we can eliminate these with the following transformation.

\begin{transformation}[Remove $\alpha$ and $\lambda$.] 
 Let~\tempG\ be a temporal graph and let $c\colon E \rightarrow \mathds N$ be a cost function. Transform $(\mathcal G,c)$ into $(\mathcal G',c',c_\lambda,\ind,A)$ where 	
\begin{itemize}
 \item $\mathcal G'$ is an instantaneous temporal graph $\mathcal G'=(V',E',T,\beta')$ with
 \begin{itemize}
 	\item $V' =V \cup V^E$ with $V^E:=\{v_e \mid e \in E \}$,
 	\item $E' = E^O \cup E^I$ with $E^O := \{(v,v_e,t) \mid e=(v,u,t,\lambda) \in E \}$ and	
 								$E^I = \{ (v_e,u,t + \lambda + \alpha(u))  \mid e=(v,u,t,\lambda) \in E \}$, 
 
  	\item $\beta' \colon V' \rightarrow \mathds N$ with $\beta'(v) := \beta(v)$ for all $v \in V$, else $\beta'(v):= T$,
 \end{itemize}
 \item $c' \colon E' \rightarrow \mathds N$ is cost function with $c'(e) := c(\hat e)$ for all $e=(v,v_{\hat e},t) \in E^O $, \\else~${c'(e) :=0}$. 
 \item $c_{\lambda} \colon E' \rightarrow \mathds N$ is a transmission-cost function with $c_{\lambda}(e) := \ell$ for $e=(v_{\hat e},w,t + \ell + \alpha(w)) \in E^I$ and $\hat e = (v,w,t,\ell) \in E$, else ${c_{\lambda}(e) := 0}$.
 \item $\ind \colon V' \rightarrow \{0,1\}$ is a vertex-index function with $\ind(v) := 1$ if $v \in V$, \\else~$\ind(v):=0$.
 \item $\alp(v) \colon V' \rightarrow \{0,1\}$ is an auxiliary function with $A(v) := \alpha(v)$ if $v \in V$, \\else~$A(v):=0$.
\end{itemize}
 \label{transformation}
 \end{transformation}
 
 We now show that any temporal graph can be transformed by \cref{transformation} into an equivalent instantaneous temporal graph in linear time such that 
 any optimal temporal walk in the instantaneous temporal graphs directly corresponds to an optimal temporal walk in the original graph and vice versa.
 Therefore, we have to slightly adapt the formula for the linear combination as shown in the following proposition.
\begin{proposition}
\label{prop:transformation}
%
Let \tempGT~be a temporal graph, let~$c\colon E \rightarrow \mathds N$ be a cost function, and let~$s,z \in V$. 
Let further $(\mathcal G' = (V',E',T', \beta'),c',c_{\lambda},\ind,A )$ be the result of applying \cref{transformation} to $(\mathcal G,c)$. 

For $\delta_1,\ldots,\delta_7\in \mathds Q^+_0$,
there exists a 
temporal walk~$P=\left(\left (v_{i-1},v_i,t_i,\lambda_i \right)\right)_{i=1}^k =\left(e_i\right)_{i=1}^k$ 
from $s$ to $z$ in $\mathcal G$ is optimal with respect to a linear combination of $\delta_1,\ldots,\delta_7$ 
if and only if 
the temporal walk
$$P'= 
\left(\left (v_{i-1},v_{e_i},t_i),(v_{e_i},v_i,t_i+\lambda_i+\alpha(v_i))\right)\right)_{i=1}^k = (e'_i)_{i=1}^{2k} = (v'_{i-1},v'_i,t'_i)_{i=1}^{2k}$$ 
from $s$ to $z$ in $\mathcal G'$ is optimal with respect to the new formula for a linear combination of our optimality criteria defined as follows:
\begin{align*}
		\valp(P') = ~&\delta_1 \cdot t'_{2k} &\textit{Foremost}\\
		+ ~&\delta_2\cdot(-t'_1) &\textit{Reverse-Foremost}\\
		+ ~&\delta_3\cdot(t'_{2k} -  t'_1) &\textit{Fastest}\\
		+ ~&\delta_4\cdot \sum_{i=1}^{2k} c_\lambda(e'_i) &\textit{Shortest}\\
		+ ~&\delta_5\cdot \sum_{i=1}^{2k} c(e'_i) &\textit{Cheapest}\\
		+ ~&(\delta_6 /2 )  \cdot 2k &\textit{Minimum Hop-Count}\\
		+ ~&\delta_7 \cdot \sum_{i=1}^{2k-1} ((t'_{i+1} - (t'_i - \alp(v'_i))) \cdot \ind(v'_i)) &\textit{Minimum Waiting Time}.  
\end{align*}
\cref{transformation} runs in $O(|V| + |E|)$ time.
\end{proposition}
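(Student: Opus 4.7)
The plan is to define a bijection $\phi$ between temporal $s$-$z$-walks in $\mathcal G$ and temporal $s$-$z$-walks in $\mathcal G'$ and then to verify that $\val(P) - \valp(\phi(P))$ is a constant depending only on $s$, $z$, and the problem parameters, but not on $P$. Equivalence of optimality under any linear combination then follows directly. The $O(|V|+|E|)$ running-time bound is immediate: \cref{transformation} adds $|E|$ vertices to $V^E$, exactly $2|E|$ arcs to $E^O \cup E^I$, and assigns constant-size labels ($\beta'$, $c'$, $c_\lambda$, $\ind$, $A$) once per new object.

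The bijection is the one suggested in the statement: $P = ((v_{i-1},v_i,t_i,\lambda_i))_{i=1}^k$ maps to $P' = ((v_{i-1},v_{e_i},t_i),(v_{e_i},v_i,t_i+\lambda_i+\alpha(v_i)))_{i=1}^k$. For $\phi$ to be a bijection, I first argue that every temporal $s$-$z$-walk in $\mathcal G'$ has this block structure: each $v_e \in V^E$ is incident in $E'$ only to the two arcs created from $e$, and $s,z \in V$, so any walk must alternate between $V$ and $V^E$, and the pairing of arcs is forced. Then I verify both directions of the temporal-walk conditions: the $\mathcal G$-condition $t_i+\lambda_i+\alpha(v_i) \leq t_{i+1} \leq t_i+\lambda_i+\beta(v_i)$ translates to $t'_{2i} \leq t'_{2i+1} \leq t'_{2i}+\beta'(v_i)$ at intermediate $V$-vertices (using $t'_{2i}=t_i+\lambda_i+\alpha(v_i)$ and $t'_{2i+1}=t_{i+1}$), while at $V^E$-vertices $\beta'(v_e)=T$ makes the constraint vacuous. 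The main technical subtlety, and the part I expect to be most error-prone, is bookkeeping the $\alpha$-shift that has been absorbed into the time stamps of the $E^I$-arcs, since it must be consistently undone in every downstream term of the objective.

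For the objective comparison I would go term by term. The reverse-foremost terms are equal since $t'_1=t_1$. The shortest, cheapest, and minimum-hop-count terms coincide once one uses that $c_\lambda$ places $\lambda_i$ on the $i$-th incoming arc and zero on the outgoing one, that $c'$ places $c(e_i)$ on the outgoing arc and zero on the incoming one, and that the factor $\delta_6/2$ compensates for the arc doubling $2k$ vs.\ $k$. The minimum-waiting-time term reduces to the original sum because $\ind(v_{e_i})=0$ eliminates contributions at $V^E$-vertices, and the $\alp$-correction at $V$-vertices cancels the $\alpha$-shift, giving $t'_{2j+1} - (t'_{2j}-\alp(v_j)) = t_{j+1} - t_j - \lambda_j$. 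The foremost and fastest terms differ from their $\mathcal G$-counterparts only by the additive constant $\alpha(z)$ coming from $t'_{2k}=t_k+\lambda_k+\alpha(z)$, which is identical for all $s$-$z$-walks and therefore cannot affect optimality. Summing these weighted differences yields the promised constant offset, which together with the bijection establishes the biconditional.
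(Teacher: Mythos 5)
Your proposal follows essentially the same route as the paper's own proof: establish the walk correspondence via the forced alternation between $V$ and $V^E$ (each $v_e$ being incident only to its two arcs), then compare $\val$ and $\valp$ term by term using $t'_{2i-1}=t_i$, $t'_{2i}=t_i+\lambda_i+\alpha(v_i)$, the splitting of $c$ and $c_\lambda$ over arc pairs, the $\delta_6/2$ compensation, and the $\ind$/$\alp$ corrections, arriving at the same additive offset $(\delta_1+\delta_3)\cdot\alpha(z)$ that is independent of the walk. Your explicit verification that the waiting-time constraints are preserved under the correspondence is a detail the paper's proof leaves implicit, but it does not change the approach.
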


\begin{proof}
Now, we will show that any temporal walk~$P$ in~$G$ corresponds to a temporal walk~$P'$ in~$G'$ such that~$\val(P) = \valp(P')$ and vice versa.
To this end, observe that each vertex~$v\in V^E$ has an in-going and an out-going arc and the vertex set~$V = V'\setminus V^E$ is an independent set in~$G'$.
Hence, each temporal walk in~$G'$ from~$s$ to~$z$ alternately uses vertices in~$V$ and~$V^E$.
Since each vertex in~$V^E$ represents an arc in~$G$, each temporal walk in~$G'$ has a unique representation in~$G$ and vice versa.
It remains to show that~$\val(P) = \valp(P') - C$ for any temporal walk~$P$ in~$G$ where~$C$ is a constant only depending on the last vertex in~$P$.
Observe that by construction each vertex~$v_i$ in~$P$ is the same vertex as~$v'_{2i}$ in~$P'$ and the arc~$(v_{i-1},v_i,t_i,\lambda_i)$ is represented by the vertex~$v'_{2i}$ and the arcs~$e'_{2i-1}$ and~$e'_{2i}$.
It holds that~$\lambda_i = c_\lambda(e'_{2i}) = c_\lambda(e'_{2i-1}) + c_\lambda(e'_{2i})$ and~$c(e_i) = c'(e'_{2i}) = c'(e'_{2i-1}) + c'(e'_{2i})$.
Finally,~$t_{i+1} = t'_{2i+1}$ and~$t_i + \lambda_i = t'_{2i} - \alp(v'_{2i})$ and hence~$t_{i+1} - (t_i + \lambda_i) = t'_{2i+1} - (t'_{2i} - \alp(v'_{2i}))$.
Thus,
\begin{align*}
			\val(&P) =\\
			  ~&\delta_1 \cdot(t_k + \lambda_k) + \delta_2\cdot(-t_1) + \delta_3\cdot(t_k  + \lambda_k - t_1) + \delta_4\cdot \sum_{i=1}^k \lambda_i \\
			~&+ \delta_5\cdot \sum_{i=1}^k c(e_i) + \delta_6\cdot k + \delta_7\cdot \sum_{i=1}^{k-1} (t_{i+1} - (t_i + \lambda_i))  \\
			= ~&\delta_1 \cdot(t_k + \lambda_k + \alpha(t_k) - \alpha(t_k)) + \delta_2\cdot(-t_1) \\
			~&+ \delta_3\cdot(t_k + \lambda_k + \alpha(t_k) - \alpha(t_k) - t_1) + \delta_4\cdot \sum_{i=1}^k \lambda_i + \delta_5\cdot \sum_{i=1}^k c(e_i)\\
			~&+ \delta_6\cdot k + \delta_7\cdot \sum_{i=1}^{k-1} (t_{i+1} - (t_i + \lambda_i + \alpha(t_i) - \alpha(t_i)))  \\
			= ~&\delta_1 \cdot(t'_{2k-1} - \alp(v'_{2k})) + \delta_2\cdot(-t'_1) + \delta_3\cdot(t'_{2k-1} - \alp(v'_{2k}) - t'_1)\\
			~&+  \delta_4\cdot \sum_{i=1}^{2k} c_{\lambda}(e'_i) + \delta_5\cdot \sum_{i=1}^{2k} c'(e'_i) + (\delta_6/2)\cdot 2k\\
			~&+ \delta_7\cdot \left(\sum_{i=1}^{k} ((t'_{2i} - (t'_{2i-1} - \alp(v'_{2i-1}))) \cdot 0) + \sum_{i=1}^{k-1} (t'_{2i+1} - (t'_{2i} - \alp(v'_{2i})))\right)\\
			= ~&\delta_1 \cdot(t'_{2k-1} - \alp(v'_{2k})) + \delta_2\cdot(-t'_1) \\
			~&+ \delta_3\cdot(t'_{2k-1} - \alp(v'_{2k}) - t'_1) + \delta_4\cdot \sum_{i=1}^{2k} c_{\lambda}(e'_i) + \delta_5\cdot \sum_{i=1}^{2k} c'(e'_i)\\
			~&+ (\delta_6/2)\cdot 2k + \delta_7\cdot \sum_{i=1}^{2k-1} ((t'_{i+1} - (t'_i - \alp(v'_{i}))) \cdot \ind(v'_i))  \\
			= ~&\valp(P') - (\delta_1 + \delta_3) \cdot \alp(v'_{2k}).
\end{align*} 

Observe that~$(\delta_1 + \delta_3) \cdot \alp(v'_{2k})$ is independent of the temporal walk~$P'$ and hence any optimal walk~$P$ in~$G$ corresponds to an optimal temporal walk~$P'$ in~$G'$.
Lastly, notice that it is easy to verify that \cref{transformation} runs in $O(|V|+|E|)$ time.
%
%
\end{proof} 
The algorithm for computing optimal temporal walks that we will introduce in the forthcoming section will find temporal walks optimizing the formula $\valp(\cdot)$ introduced in~\cref{prop:transformation}.
For instantaneous temporal graphs, where we do not have to use \cref{transformation}, optimizing according to $\valp(\cdot)$ 
is not a drawback as stated in the following: 
\begin{observation}
\label{obs:instgraph}
Let \tempGTna~be an instantaneous temporal graph, let~$c\colon E \rightarrow \mathds N$ be a cost function. 
Let further $P=\left(\left (v_{i-1},v_i,t_i,\lambda_i \right)\right)_{i=1}^k=\left(e_i\right)_{i=1}^k$ 
be a temporal walk in~$\mathcal G$.
For $\delta_1,\ldots,\delta_7\in \mathds Q^+_0$, it holds that $$\val(P) = \valp(P) + (\delta_6/2) \cdot k$$ for $c' =c$, $c_\lambda\equiv 0$, $\ind \equiv 1$, and $A\equiv 0$.
\end{observation}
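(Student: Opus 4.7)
\textbf{Proof plan for \cref{obs:instgraph}.} The plan is a direct term-by-term comparison of $\val(P)$ and $\valp(P)$ under the simplifying hypotheses that the graph is instantaneous ($\lambda_i = 0$ for every arc $e_i = (v_{i-1},v_i,t_i,\lambda_i)$ of $P$ and $\alpha \equiv 0$) together with $c' = c$, $c_\lambda \equiv 0$, $\ind \equiv 1$, and $A \equiv 0$. No transformation is invoked here; we just evaluate the two formulas on the same walk $P$ with the same index range $i = 1,\dots,k$.

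First I would list the seven summands of $\valp(P)$ after substituting the hypotheses. The \emph{Foremost} term becomes $\delta_1 t_k$ (since $\lambda_k = 0$, so $t_k + \lambda_k = t_k$). The \emph{Reverse-Foremost} term is $-\delta_2 t_1$, matching $\val(P)$ exactly. The \emph{Fastest} term becomes $\delta_3(t_k - t_1)$, again matching. Because $c_\lambda \equiv 0$, the \emph{Shortest} contribution in $\valp$ vanishes; but the \emph{Shortest} contribution in $\val(P)$ is $\delta_4 \sum_{i=1}^k \lambda_i = 0$ since the graph is instantaneous, so the two agree. The \emph{Cheapest} term is $\delta_5 \sum_{i=1}^k c'(e_i) = \delta_5 \sum_{i=1}^k c(e_i)$ by $c' = c$. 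The \emph{Minimum Waiting Time} term in $\valp$ is $\delta_7 \sum_{i=1}^{k-1} (t_{i+1} - (t_i - A(v_i))) \cdot \ind(v_i)$; since $A \equiv 0$ and $\ind \equiv 1$, this collapses to $\delta_7 \sum_{i=1}^{k-1}(t_{i+1} - t_i)$, which coincides with $\val(P)$'s waiting-time summand under $\lambda_i = 0$.

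The only place where the two formulas disagree is the \emph{Minimum Hop-Count} term: $\val(P)$ contributes $\delta_6 k$, whereas $\valp(P)$ contributes $(\delta_6/2) k$. Subtracting yields $\val(P) - \valp(P) = \delta_6 k - (\delta_6/2) k = (\delta_6/2) k$, which is exactly the claimed identity.

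There is no real obstacle here; the observation is a bookkeeping statement that isolates the one constant-factor discrepancy between the two objectives (the hop-count normalization introduced in \cref{prop:transformation} to account for the fact that each original arc is split into two arcs by \cref{transformation}). The proof will therefore consist of a short paragraph listing the six matching terms and a one-line computation for the seventh.
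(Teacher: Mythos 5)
Your term-by-term comparison is correct: under $\lambda_i=0$, $c'=c$, $c_\lambda\equiv 0$, $\ind\equiv 1$, and $A\equiv 0$, the six non-hop-count summands of $\val(P)$ and $\valp(P)$ coincide, and the hop-count terms differ by exactly $\delta_6 k - (\delta_6/2)k = (\delta_6/2)k$. The paper states this observation without proof, and your bookkeeping argument is precisely the intended (and only natural) justification.
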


\section{Algorithm}
\label{sec:algorithm}
In this section, we present a single-source optimal walks algorithm with respect to any linear combination of our optimality criteria.
That is, given a temporal graph \tempGTna, a cost function~$c \colon E \rightarrow \mathds N$ and a source vertex~$s \in V$, we compute an optimal temporal walk with respect to any linear combination with $\delta_1,\ldots,\delta_7 \in \mathds Q$ from $s$ to all vertices in the temporal graph (if it exists).
To this end, we first apply \cref{transformation} to $\mathcal G$ to obtain an instantaneous temporal graph.
\cref{algo} then performs for each~$t \in [T]$ three main steps: 
\begin{description}
	\item[GraphGeneration.] Generate~$G_t$ which only contains the arcs present at time step~$t$ and add arcs from~$s$ to each vertex~$v$ in~$G_t$ that has been reached within the last~$\beta(v)$ time steps. 
	\item[ModDijkstra.] Run a modified version of Dijkstra's algorithm to compute for each~$v$ in~$G_t$ the optimal walk from~$s$ to~$v$ that arrives at time step~$t$ (if it exists).
	\item[Update.] Update a list with representations of all candidates for optimal walks (with corresponding arrival times and optimal values) from~$s$ to each~$v \in V$.
\end{description}
\smallskip

Efficiently storing and accessing the value of an optimal walk from $s$ to $v$ that arrives at a certain time step~$t$ is the heart of the algorithm. 
We can maintain this information in~$O(|E|)$ time during a run of \cref{algo} such that this information can be accessed in constant time. 
The algorithm is presented in \cref{algo} from which we can derive the following theorem: 
\begin{theorem}
With respect to any linear combination of the optimality criteria, an optimal temporal walk from a source vertex~$s$ to each vertex in a temporal graph can be computed in~$O( |V| + |E| \log |E|)$ time.
\label{theorem:algo}
\end{theorem}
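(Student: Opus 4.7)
The plan is to prove correctness and the running time bound separately, using the three-phase structure of the algorithm as the backbone. By \cref{prop:transformation} I may assume the input graph is instantaneous with no minimum-waiting-times, so all time-arcs incident to a particular time step $t$ share the same time stamp and the objective $\valp(\cdot)$ decomposes naturally across time steps.

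For correctness I would argue by induction on $t \in [T]$ that, after processing time step $t$, the algorithm has stored for each vertex $v \in V$ a representation of the best $\valp$-value among all $s$-$v$-walks that arrive at some time $t' \le t$. The key structural observation is that any optimal walk ending with an arc at time $t$ decomposes as a prefix arriving at some vertex $u$ at time $t' \le t$ with $t - t' \le \beta(u)$, followed by a sequence of arcs all at time $t$ inside $G_t$. Hence it suffices, in GraphGeneration, to splice in virtual arcs from $s$ into each $u \in V(G_t)$ whose best prefix is still valid under its $\beta(u)$ constraint, and then in ModDijkstra to run a single-source shortest-path computation inside the augmented snapshot. Since the linear combination has non-negative coefficients $\delta_1,\dots,\delta_7 \in \mathds Q_0^+$, the arc weights induced by the criteria are non-negative, so Dijkstra applies. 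Taking the minimum over $t$ in Update yields the global optimum.

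For the running time, I would account for work per phase summed over $t \in [T]$. Each original time-arc belongs to exactly one $G_t$, so $\sum_t |E_t| = |E|$ and $\sum_t |V(G_t)| \le 2|E|$. GraphGeneration touches each $v \in V(G_t)$ and retrieves its current best prefix, so its total cost is $O(|V| + |E|)$ provided each retrieval is $O(1)$. ModDijkstra on the augmented snapshot with $n_t$ vertices and $m_t$ arcs runs in $O(n_t + m_t \log m_t)$ with a binary heap; summing gives $O(|V| + |E|\log|E|)$. Update performs $O(1)$ work per newly discovered $s$-$v$-walk arrival at time $t$, and there are $O(|E|)$ such events overall.

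The main obstacle, and the heart of the argument, will be justifying the $O(1)$ access claim used in GraphGeneration. For each $v$ the algorithm must, at query time $t$, return the best $\valp$-value among arrivals $t' \in [t - \beta(v), t]$. I would maintain for each vertex $v$ a monotone deque (or sliding-window minimum) keyed by arrival time, popping entries whose arrival time fell out of $[t - \beta(v), t]$ and popping dominated entries from the tail when a new candidate arrives. Each arrival is pushed and popped at most once, so the amortized cost per arrival is $O(1)$ and the cumulative maintenance cost over the whole execution is $O(|E|)$. A subtle point I would need to verify is that the $\delta_2$ and $\delta_7$ terms (which depend on the departure time $t_1$ and waiting time along the walk) can indeed be encoded additively into the arc weights seen by Dijkstra together with the precomputed prefix value stored at $v$; this follows because, once the prefix arrival time $t'$ at $v$ is fixed, all remaining contributions are additive over subsequent arcs, so the sliding-window minimum is taken over a linear combination of $t'$ and the stored prefix cost, which again admits an $O(1)$-query sliding-window structure.
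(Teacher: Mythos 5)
Your proposal is correct and follows essentially the same route as the paper: the same per-time-step three-phase induction, the same decomposition of an optimal walk into a prefix arriving at some $u$ within the last $\beta(u)$ steps plus a suffix inside the snapshot $G_t$, and the same running-time accounting via $\sum_t |E_t| = |E|$. Your monotone deque with domination-based tail popping is exactly the paper's ``redundant tuple'' removal from the lists $L(v)$ (\cref{lemma:redundantEl}), so the $O(1)$-query claim you flag as the main obstacle is resolved the same way the paper resolves it.
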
 
\begin{algorithm}[http!]
	\caption{Computes optimal walks.}
	\label{algo}
	\SetKwInOut{Input}{input}\SetKwInOut{Output}{output}
	\SetKwFunction{sso}{SingleSourceOptWalk}
	\SetKwFunction{gendirgraph}{generateGraph}
	\SetKwFunction{prim}{Prims}
	\SetKwFunction{dijkstra}{modDijkstra}
	\SetKwFunction{bfs}{BFS}
	\SetKwInOut{Variables}{Variables}
	\KwIn{An instantaneous temporal graph~\tempGTna, two cost functions~$c, c_\lambda$, two vertex functions $\ind,A$, and a source vertex~$s\in V$.}
	\KwOut{For each~$v \in V$ the specific length of an optimal $s$-$v$ walk.}
	\Variables{
	
	\small%
	\begin{tabularx}{.96\textwidth}{ c X }
		$\opt(v)$ 	& stores the value of an optimal walk from $s$ to $v$ within~$[0,t]$; \\
		$L(v)$ 		& is a sorted list $[(\opt_{a_1},a_1), \ldots,(\opt_{a_k},a_k)]$ where $\opt_{a_i}$ is an optimal value of a walk from $s$ to $v$ that arrives at time~$a_i$ with $t + \beta(v) \leq a_i \leq t $.\\ 
		$\delta_1,...,\delta_7$~ & linear combination of the optimality criteria \textit{foremost}, \textit{reverse-foremost}, \textit{fastest}, \textit{shortest}, \textit{cheapest}, \textit{minimum hop-count}, and \textit{minimum waiting time}, respectively.\\
 		\bottomrule
	\end{tabularx}}
%

	\BlankLine
	\SetKwProg{Fn}{function}{:}{}
		Initialize $\opt(v) = \infty$ and $L(v)$ as empty list for all $v \in V \setminus \{s\}$ \label{algoLine:in1}\\

		\For{$ t = 1,\ldots, T  ~\with~E_t \not = \emptyset$}{
			$G,d_t,d_r \leftarrow$ \gendirgraph{$G_t$} \label{algoLine:genDirG}\\
			$V',\opt_t \leftarrow$ \dijkstra{$G,d_t,d_r$} \label{algoLine:dij}\\
			\For{$v \in V' $}{ \label{algoLine:reached1}
			$\opt(v) \leftarrow \min\{\opt(v), \delta_1 \cdot t  - \delta_2 \cdot T + \delta_3 \cdot (t - T) + \opt_t(v)\}$ \label{algoLine:reached2}\\
			$L(v) \leftarrow$ append $(\opt_t(v),t)$ and delete \emph{redundant} tuples (see \cref{lemma:redundantEl})\label{algoLine:reached3} 
			}
		} 
		\Return{$\opt$} \label{algoLine:return} \\
	\Fn{\gendirgraph{$G_t$}}{\label{algoLine:gen0}
	Initialize $E_r \leftarrow \emptyset$; $d_r(v,w) \leftarrow \infty$ and $d_t(v,w) \leftarrow \infty$ for all $v,w \in V_t \cup \{s\}$\label{algoLine:gen1}\\
		\For{$(v,w) \in E_t $}{\label{algoLine:gen6}
	$d_t(v,w)\leftarrow \begin{cases} (\delta_2+ \delta_3) \cdot(T-t) + \delta_4 \cdot c_\lambda(v,w,t)+ \delta_5 \cdot c(v,w,t) + \delta_6 & \text{ if }v=s\\ 
		  								\delta_4 \cdot c_\lambda(v,w,t)+ \delta_5 \cdot c(v,w,t) + \delta_6  & \text{ else}\end{cases}$} \label{algoLine:gen7}
	\For{$v \in V_t \setminus \{s\}$}{\label{algoLine:gen2}
	delete tuples $(\opt_a,a)$ in $L(v)$ with $a + \beta(v) < t$\\
	\If{$L(v) ~\notempty$}{ \label{algoLine:gen3}
	$E_r \leftarrow E_r \cup \{(s,v)\}$ \label{algoLine:gen4}\\
	$\opt_a \leftarrow \min\{\opt_a \mid (\opt_a,a) \in L(v) \}$\\
	$d_r(s,v) \leftarrow \opt_a + \delta_7 \cdot \ind(v)\cdot(t - a + A(v)) $\label{algoLine:gen5}
	}}
	\Return{$\big((V_t \cup \{s\},E_t \cup E_r),d_t,d_r\big)$} \label{algoLine:gen8}
	}
	\Fn{\dijkstra{$(V,E_t \cup E_r),d_t,d_r$}}{\label{algoLine:dij0}
	initialize $\opt_t(v)\leftarrow\infty$, $r(v)\leftarrow\infty$ for all $v \in V_t$, and $r(s) =0$\\
	initialize $Q\leftarrow V$ and  $V'\leftarrow \emptyset$\\
	\While{$Q \not = \emptyset$}{
	$v$ $\leftarrow$ vertex in $Q$ with minimum $r(v)$\\
	remove $v$ from $Q$\\
	\For{$(v,w) \in E_t \cup E_r$}{
	$r(w) \leftarrow \min\{r(w), r(v)+ \min\{d_t(v,w),d_r(v,w)\}\}$\\
	\If{$(v,w) \in E_t$}{
	$\opt_t(w) \leftarrow \min\{\opt_t(w), r(v)+d_t(v,w)\}$\\
	$V' \leftarrow  V' \cup \{w\}$
	}
	}
	}
	\Return{$V',\opt_t$}\label{algoLine:dij1}
	}
\end{algorithm}
\paragraph{Algorithm Details.}
Let $\mathcal G'$~be a temporal graph with a cost function $c'\colon E \rightarrow \mathds N$ and let~$s \in V$ be the source. 
We can apply \cref{transformation} to $(\mathcal G',c')$ to obtain $($\tempGTna$,c,c_\lambda,\ind,A)$ where $\mathcal G$ is an instantaneous temporal graph. If $\mathcal G'$ is already an instantaneous temporal graph, then we set $c =c'$, $c_\lambda\equiv 0$, $\ind \equiv 1$, and $A\equiv 0$ as shown in~\cref{obs:instgraph}.

For~$\delta_1,\ldots,\delta_7 \in \mathds Q$, \cref{algo} computes an optimal walk~$P=\left(\left(v_{i-1},v_i,t_i\right)\right)_{i=1}^k=\left(e_i\right)_{i=1}^k$ from $s$ to $v$ for all $v\in V$ with respect to 
\begin{align*}
 		\valp(P)= &\delta_1 \cdot(t_k ) + \delta_2\cdot(-t_1) + \delta_3\cdot(t_k - t_1) + \delta_4\cdot \sum_{i=1}^k c_\lambda(e_i) \\
		&+\delta_5\cdot \sum_{i=1}^k c(e_i) + \delta_6\cdot k + \delta_7 \cdot \sum_{i=1}^{k-1} (t_{i+1} - t_i + A(v_{i+1})) \cdot \ind(v_i) .
\end{align*}
We have shown in \cref{transformation} that an optimal walk with respect to $\valp(P)$ in $\mathcal G$ directly corresponds to an optimal walk with respect to $\val(P)$ in the original temporal graph $\mathcal G'$.

For each vertex~$v \in V \setminus \{s\}$, \cref{algo} stores in 
 $\opt(v)$ the value of an optimal walk from~$s$ to~$v$ and 
in $L(v)$ a list of all relevant arrival times from~$s$ to~$v$ with their optimal values.
In the beginning, $\opt(v) = \infty$ and $L(v)$ is initially an empty list~(\cref{algoLine:in1} in \cref{algo}). 
Then, for each time step~$t$, \cref{algo} computes the optimal walk from the source~$s$ to $v \in V$ that arrives in time step~$t$ (if it exists). 
Thus, \cref{algo} performs for each $t \in \{1,\ldots, T\}$ the following steps:
\begin{description}
\item[GraphGeneration.]
Generate a static graph~$G$~with \texttt{GenerateGraph}~(\cref{algoLine:genDirG} and \crefrange{algoLine:gen0}{algoLine:gen8}). 
This graph consists of the static graph $G_t=(V_t ,E_t)$, that is, the static graph induced by all time-arcs with time stamp $t$, and the source vertex~$s$. 

The weight of an arc $(v,w) \in E_t$ is set to their arc cost~$\delta_4 \cdot c_\lambda(v,w) + \delta_5 \cdot c(v,w) + \delta_6$. 
If further $v=s$, then we have to add $(\delta_2+ \delta_3) \cdot(T-t)$ to take the departure time in~$s$ into account for the criteria \textit{reverse-foremost} and \textit{fastest}, see \cref{algoLine:gen7}.

Additionally, non-existing arcs from $s$ to each vertex~$v \in V_t$ are added if there exists a temporal walk from $s$ to~$v$ that arrived not later than~$\beta(v)$ time steps ago. 
Let $\opt_a$ be the optimal value among all walks that arrive within the time interval~$[t-\beta(v),t]$ and let $a$ be the corresponding arrival time in~$v$. 
Additionally, the minimum waiting time $A(v)$ plus the additional waiting time $(t-a)$ in $v$ has to be taken into account if $\ind(v)=1$. Hence, the weight of arc~$(s,v)$ is set to~$\opt_a + \delta_7 \cdot \ind(v) \cdot (t - a + A(v) )$, see \cref{algoLine:gen5}.
Let $E_r$ be the set of these additional arcs. 
Then,~$G = (V_t \cup \{s\}, E_t \cup E_r, d_t,d_r )$. 

\item[ModDijkstra.]
Run a modified Dijkstra Algorithm on~$G$ with \texttt{modDijkstra} (\cref{algoLine:dij} and \crefrange{algoLine:dij0}{algoLine:dij1}). 
Instead of computing a shortest walk (using the original Dijkstra Algorithm) in $G$, compute a shortest walk among all walks that end in an arc of $E_t$. 
This represents a temporal walk that arrives in time step $t$ with optimal value. 
The function \texttt{modDijkstra} returns the set $V'$ of vertices that can be reached within~$G$ via an arc in $E_t$ and the function~$\opt_t \colon V' \rightarrow \mathds N$ that maps each~vertex $v\in V'$ to its optimal value of a walk from $s$ to $v$ that arrives exactly at time $t$. 

\item[Update.]
For each $v \in V'$, set the optimum~$\opt(v)$ to the minimum of its current value and the optimal value of a newly computed walk, that is, $\opt(v) = \min\{\opt(v), \delta_1 \cdot t  + \delta_3 \cdot (t - T) + \opt_t(v)\}$~(\cref{algoLine:reached2}). Herein, we have to add the arrival time~$t$ that has not been taken into account in the calculation of the optimal value because it is the same for all walks found at time step~$t$. Add the tuple~$( \opt_t(v), t )$ to list~$L(v)$~(\cref{algoLine:reached3}).  
\end{description}
\smallskip

After the \textbf{Update} step for time step~$t$, the list $L(v)$ contains all tuples $(\opt_\text{ar},\text{ar})$ such that there exists a walk from~$s$ to~$v$ that arrives in~$\text{ar} \in [t-\beta(v), t]$ with its optimal value~$\opt_\text{ar}$. 
We want to have constant-time access to the optimal value of a walk that arrives in $v$ within time interval~$[t-\beta(v), t]$. 
This can be achieved by deleting tuples from list~$L(v)$ that are \emph{redundant}, that is these tuples are nonmeaningful for the correct computation of optimal  walks.
Let $$L(v)=[(\opt_{a_1},a_1), \ldots,(\opt_{a_k},a_k)]$$ 
be such a list for a time step $t$ with $t-\beta(v) \leq a_1 < \ldots < a_k \leq t$. 
A tuple~$(\opt_{\text{ar}},\text{ar})$ is redundant if there exists a tuple with an arrival time greater than $\text{ar}$ such that its optimal value  smaller than~$\opt_\text{ar}$ plus the additional waiting time. This is shown with the following lemma: 
\begin{lemma}
	\label{lemma:redundantEl}
	For a time step $t \in \{1,\ldots,T\}$ and a vertex $v \in V$, if there are two tuples $(\opt_{a_i},a_i),(\opt_{a_j},a_j) \in L(v)$ with $a_i < a_j$ and $$ \opt_{a_j} \leq \opt_{a_i}  + \delta_7 \cdot \ind(v) \cdot (a_j-a_i),$$ then $(\opt_{a_i},a_i)$ is redundant and can be removed from $L(v)$. 
\end{lemma}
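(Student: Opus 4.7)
The plan is to interpret \emph{redundant} operationally: a tuple in $L(v)$ serves one and only one purpose, namely to supply, at some later iteration $t'\ge t$, a candidate weight for the virtual arc $(s,v)$ inserted in \texttt{generateGraph}. By inspection of lines~\ref{algoLine:gen3}--\ref{algoLine:gen5}, the contribution of a tuple $(\opt_a, a)\in L(v)$ to $d_r(s,v)$ at iteration~$t'$ is
\[
W(\opt_a,a,t')\ =\ \opt_a \,+\, \delta_7\cdot\ind(v)\cdot (t'-a+\alp(v)).
\]
So to show $(\opt_{a_i},a_i)$ is redundant it suffices to exhibit another tuple in $L(v)$ that (i) remains in $L(v)$ at every iteration where $(\opt_{a_i},a_i)$ still would, and (ii) yields a $W$-value no larger at each such iteration. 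I will take $(\opt_{a_j},a_j)$ itself as that witness.

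The first step is a lifespan argument. At the beginning of iteration~$t'$ the algorithm discards every tuple with $a+\beta(v)<t'$, so $(\opt_{a_i},a_i)$ can only influence $d_r(s,v)$ while $t'\in[t,\ a_i+\beta(v)]$. Since $a_j>a_i$, we have $a_j+\beta(v)>a_i+\beta(v)$, hence $(\opt_{a_j},a_j)$ is present in $L(v)$ throughout the entire lifespan of $(\opt_{a_i},a_i)$. The second step is a direct subtraction:
\[
W(\opt_{a_i},a_i,t') - W(\opt_{a_j},a_j,t') \;=\; \opt_{a_i}-\opt_{a_j}+\delta_7\cdot\ind(v)\cdot(a_j-a_i)\ \ge\ 0,
\]
where the inequality is precisely the hypothesis of the lemma and, crucially, the dependence on~$t'$ cancels. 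Thus $(\opt_{a_j},a_j)$ dominates $(\opt_{a_i},a_i)$ uniformly in $t'$ on the shared lifespan.

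Combining the two steps, whatever minimum over $L(v)$ is used to form $d_r(s,v)$ at any future iteration is unaffected by deleting $(\opt_{a_i},a_i)$: either $(\opt_{a_j},a_j)$ itself, or some later tuple that has already subsumed $(\opt_{a_j},a_j)$ under the same redundancy relation (applied transitively), realizes a $W$-value at least as small. The only subtlety—and what I expect to be the main obstacle to phrase cleanly—is handling the cases $\ind(v)=0$ and $\delta_7=0$, where the hypothesis degenerates to $\opt_{a_j}\le \opt_{a_i}$; the above cancellation still works, but one should note that once $\delta_7\cdot\ind(v)=0$ the $W$-value reduces to $\opt_a$, so the redundancy becomes plain dominance of optimal values. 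Everything else is a one-line verification and a book-keeping check against the deletion rule on the line preceding line~\ref{algoLine:gen3}.
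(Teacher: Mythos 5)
Your proof is correct and follows essentially the same route as the paper's: both arguments observe that whenever $(\opt_{a_i},a_i)$ is still within the window $[t'-\beta(v),t']$ at a future iteration~$t'$, so is $(\opt_{a_j},a_j)$, and then verify that $\opt_{a_j} + \delta_7\cdot\ind(v)\cdot(t'-a_j+\alp(v)) \leq \opt_{a_i} + \delta_7\cdot\ind(v)\cdot(t'-a_i+\alp(v))$ with the dependence on~$t'$ cancelling. Your additional remarks on the degenerate case $\delta_7\cdot\ind(v)=0$ and on transitivity are harmless elaborations of the same argument.
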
 
\begin{proof}
	After all time-arcs with time stamp~$t$ have been processed, \cref{algo} only considers time-arcs with time stamp~$t'>t$. 
	In the generated graph $G$ (\cref{algoLine:genDirG}), the algorithm adds an arc from~$s$ to~$v\in V_{t'}$ if a walk from~$s$ arrives in $v$ within~$[t'-\beta(v),t']$. 
	If~$a_i \in [t'-\beta(v),t']$, then~$a_j \in [t'-\beta(v),t']$~because~$a_i < a_j < t'$. 
	Furthermore, let~$(\opt,a)$ be the optimal value and the arrival time of a walk from $s$ to~$v$ that arrives within $[t'-\beta(v),t']$ that minimizes $$\opt + \delta_7 \cdot \ind(v) (t' - a + A(v)).$$  
	Then, the weight of the arc $(s,v)$ is set to this value. 
	Due to~$a_i < a_j \in [t'-\beta(v),t']$ and $\opt_{a_i}  + \delta_7 \cdot \ind(v) \cdot (a_j-a_i)  \geq \opt_{a_j}$, 
	we know that 
	$$\opt_{a_j} + \delta_7 \cdot \ind(v) (t' - a_j + A(v))\leq  \opt_{a_i} + \delta_7 \cdot \ind(v) (t' - a_i + A(v)).$$ 
	Hence, the tuple $(\opt_{a_i},{a_i})$ is not needed in the list~$L(v)$ at time step $t$ and can be removed.  
\end{proof} 
If $L(v)$ does not contain any redundant tuples, then it also holds that~$$\opt_{a_1} + \delta_7 \cdot \ind(v) \cdot (a_2-a_1)< \cdots < \opt_{a_k}.$$ 
Hence,~$(a_1, \opt_{a_1})$ contains the optimal value and arrival time of a walk that arrives within time interval~$[t-\beta(v), t]$ and minimizes $\opt + \delta_7 \cdot \ind(v) (t' - a + A(v)).$ 
It follows that finding~$\opt_a = \min\{\opt_a \mid (\opt_a,a) \in L(v) \}$ in \cref{algoLine:gen5} takes constant time. 
The deletion of redundant tuples takes $O(|E|)$ time during the whole run of \cref{algo}. 
With these considerations at hand, we can derive the following lemma.
\begin{lemma}
	\cref{algo} runs in $O(|V|+ |E| \log |E|)$ time.
	\label{lemma:algoFastestRunningTime}
\end{lemma}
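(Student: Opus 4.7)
The plan is to bound the running time of \cref{algo} by analyzing the three subroutines (\texttt{generateGraph}, \texttt{modDijkstra}, \texttt{Update}) per time step~$t$ and then summing across all~$t \in [T]$. The key accounting observation is that $\sum_{t \in [T]} |E_t| = |E|$ and $\sum_{t \in [T]} |V_t| \leq 2|E|$, since each time-arc appears in exactly one $E_t$ and contributes at most two vertices to $V_t$. After initialization in $O(|V|)$ time (\cref{algoLine:in1}), everything else will be charged either to the processing of time-arcs or to individual list operations.

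For a fixed~$t$, \texttt{generateGraph} first initializes data on $V_t \cup \{s\}$ and iterates over $E_t$ to set $d_t$, taking $O(|V_t| + |E_t|)$ time. It then iterates over $v \in V_t \setminus \{s\}$, where for each such~$v$ it (i) deletes outdated tuples from $L(v)$ and (ii) reads the minimum $\opt_a$ in $L(v)$ in constant time, provided $L(v)$ is kept free of redundant tuples (this is where \cref{lemma:redundantEl} is needed, see below). Ignoring the deletion cost for a moment, this step adds $O(|V_t|)$. The returned graph has $|V_t| + 1$ vertices and at most $|E_t| + |V_t|$ arcs. Running \texttt{modDijkstra} on it with a binary heap therefore costs $O((|V_t| + |E_t|) \log(|V_t| + |E_t|))$. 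Summing over~$t$ and using $\sum_t |V_t| \leq 2|E|$ yields a total of $O(|E| \log |E|)$ for all Dijkstra calls and graph generations. The \textbf{Update} step at \crefrange{algoLine:reached1}{algoLine:reached3} touches each $v \in V'$ once, appending one tuple to $L(v)$ and possibly deleting redundant tuples; the insertion cost sums to $O(|E|)$ since at most one tuple is appended to some $L(v)$ per arc in $E_t$.

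The delicate part is bounding the total work spent on list maintenance, that is, deletion of outdated tuples in \texttt{generateGraph} and removal of redundant tuples after every append in \textbf{Update}. I would use an amortized (token) argument: each tuple is inserted into some $L(v)$ at most once, and therefore is deleted at most once. The total number of insertions across the whole execution is bounded by the number of times some arc of $E_t$ is relaxed into a new vertex in \texttt{modDijkstra}, which is at most $\sum_t |V_t| \leq 2|E|$. Hence, the total number of tuple deletions is also $O(|E|)$. To implement the redundancy check of \cref{lemma:redundantEl} in constant amortized time per insertion, I would store $L(v)$ as a doubly-linked list sorted by arrival time with a pointer to the tuple of minimum optimal value; when appending $(\opt_t(v), t)$, I scan from the tail and remove all preceding tuples that become redundant, each of which is charged its one deletion token. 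Similarly, outdated tuples with $a + \beta(v) < t$ form a prefix of $L(v)$ and can be removed one by one from the head, again charged to their insertion.

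Combining everything: initialization costs $O(|V|)$, all \texttt{modDijkstra} calls together cost $O(|E| \log |E|)$, all graph generations (excluding list work) cost $O(|E|)$, the \textbf{Update} and list maintenance cost $O(|E|)$ amortized over the full run, and the final return is $O(|V|)$. This gives the claimed bound of $O(|V| + |E| \log |E|)$. The main obstacle I foresee is the amortized analysis of list maintenance combined with the constant-time access to $\opt_a = \min\{\opt_a \mid (\opt_a, a) \in L(v)\}$ in \cref{algoLine:gen5}; once \cref{lemma:redundantEl} is invoked to justify that the minimum always sits at the head of a properly pruned $L(v)$, the remaining bookkeeping reduces to the standard one-token-per-insertion charging scheme.
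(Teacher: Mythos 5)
Your proof is correct and follows essentially the same decomposition as the paper's: per-time-step bounds of $O(|V_t|+|E_t|)$ for graph generation and $O(|E_t|\log|E_t|)$ for the modified Dijkstra call (using $|V_t|\le 2|E_t|$ and summing over $t$), plus an amortized $O(|E|)$ charge for all list insertions and deletions, justified via \cref{lemma:redundantEl} keeping the minimum at the head of each pruned $L(v)$. The only ingredient the paper states explicitly that you omit is the initial $O(|E|\log|E|)$ sorting of time-arcs by time stamp needed to enumerate the nonempty sets $E_t$, which is in any case absorbed by your bound on the Dijkstra calls.
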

\begin{proof}
The initialization in \cref{algo} can be done in $O(|V|)$ time. Furthermore, the time-arcs have to be sorted by time stamps which takes $O(|E| \log |E|)$ time. 
Then, for each time step~$t\in [T]$, \cref{algo} generates a static directed graph $G=(V_t \cup \{s\},E_t \cup E_r)$ with $O(|V_t|)$ vertices and $O(|E_t| + |V_t|)$ arcs which takes $O(|E_t| + |V_t|)$ time. 
 
For each generated graphs $G$, \texttt{modDijkstra} is executed in~$O(|E_t|  \log |E_t| )$ time. The updates of $\opt$ and $L$ afterwards run in $O(|V_t|)$ time. Note that $|V_t|$ is the number of vertices that have at least one in-going or out-going time-arc at time step $t$. Consequently, it holds that~$|V_t| \leq 2 |E_t|$.

Due to the sorting of~$L(v)$ as shown \cref{lemma:redundantEl}, maintaining these lists in \cref{algoLine:gen3,algoLine:reached3} takes only~$O(|E|)$ time during the whole run of the algorithm. In the list $L(v)$, we delete at most as many elements as there are time-arcs in the temporal graph. Recall that if $(\opt_a,a) \in L(v)$, then there exists a time-arc $(w,v,a) \in E$.

We can add up the running time by
\begin{align*}
&O\big(|V| + |E| + \sum_{t=1}^{T}  (|E_t|) + (|V_t| \log |V_t|) \big)\\ 
= ~&O\big(|V| + |E| + \sum_{t=1}^{T}  (|E_t|) + (|E_t| \log |E_t|) \big)\\ 
= ~&O\big(|V| + |E| + \sum_{t=1}^{T}  |E_t| \log |E_t| \big)\\ 
\subseteq ~&O\big(|V| + |E| \log |E|  \big) 
\end{align*}
Hence, \cref{algo} runs in $O\big(|V| + |E| \log |E| \big)$ time. 
\end{proof}

Next, we show the correctness of \cref{algo}.
We show that for every time step~$t$ and for every vertex~$v$, \cref{algo} computes an optimal walk from~$s$ to~$v$ that arrives at time step~$t$ (if it exists). 
\begin{lemma}  
For a time step $t \in [T]$, \cref{algo} computes the optimal value of a temporal walk from $s$ to $v \in V$ that arrives exactly in time step $t$. 
\label{lemma:algoFastestTimeStep}
\end{lemma}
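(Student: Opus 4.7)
The plan is to prove the lemma by strong induction on $t$, maintaining the following invariant at the start of each iteration: for every vertex $u \in V$, the list $L(u)$ represents (up to pruning of tuples that \cref{lemma:redundantEl} certifies as redundant) every arrival time $a \in [t-\beta(u), t-1]$ at which $s$ reaches $u$, together with the optimal value $\opt_a$ of a walk from $s$ to $u$ arriving exactly at $a$. The base case treats the first time step $t$ with $E_t \neq \emptyset$: all $L(u)$ are empty, so $E_r = \emptyset$ and the generated graph $G$ consists only of $E_t$, in which any $s$-walk of length one with weight $d_t(s,w)$ encodes precisely an optimal single-arc temporal walk arriving at $t$.

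For the inductive step, fix a vertex $v$ and consider any temporal walk $P$ from $s$ to $v$ arriving at time $t$. Split $P$ at the last vertex $u$ that is reached by an arc of time stamp strictly less than $t$ (take $u = s$ and an empty prefix if no such arc exists). The prefix $P_1$ from $s$ to $u$ arrives at some time $a \in [t-\beta(u), t-1]$ by the waiting-time constraint, and the suffix $P_2$ consists entirely of arcs in $E_t$. The inductive invariant together with \cref{lemma:redundantEl} then guarantees that after the pruning in \cref{algoLine:gen3} the list $L(u)$ contains a tuple $(\opt_{a^*}, a^*)$ for which $\opt_{a^*} + \delta_7 \cdot \ind(u) \cdot (t - a^* + A(u))$ is at most $\opt_a + \delta_7 \cdot \ind(u) \cdot (t - a + A(u))$ for every valid prefix arrival time $a$; this is exactly the value assigned to $d_r(s,u)$ in \cref{algoLine:gen5}.

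It remains to verify that \texttt{modDijkstra} on $G$ returns the correct $\opt_t(v)$. Every temporal walk arriving at $v$ at time $t$ corresponds to a walk in $G$ that either starts with an arc in $E_t$ out of $s$ or with an auxiliary arc in $E_r$, and then traverses only arcs of $E_t$; a term-by-term comparison of $\valp$ with the edge weights $d_t$ and $d_r$ shows that the weight of this $G$-walk equals
\[
\valp(P) - \delta_1 \cdot t + \delta_2 \cdot T - \delta_3 \cdot (t-T),
\]
where the offset collects the arrival-time dependent terms of $\valp$ that are common to every walk ending at $t$. Since all $\delta_i \geq 0$, every edge weight of $G$ is nonnegative and Dijkstra's algorithm, constrained to record $\opt_t(w)$ only when the relaxing arc lies in $E_t$, returns the minimum weight among such walks; adding back the offset in \cref{algoLine:reached2} yields the true optimal value. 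The main technical obstacle is the careful accounting of the waiting-time contribution $\delta_7 \cdot \ind(u) \cdot (t - a + A(u))$, which depends on the future $t$ not known when the prefix $P_1$ to $u$ was computed. The non-redundancy condition of \cref{lemma:redundantEl} is tailored precisely so that the surviving entries of $L(u)$ are sufficient to realize the optimum for every future $t \in [a, a + \beta(u)]$, which closes the induction.
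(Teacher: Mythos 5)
Your proof follows essentially the same route as the paper's: induction on $t$, splitting an optimal walk at the last arc with time stamp strictly less than $t$, invoking \cref{lemma:redundantEl} to show that the pruned list $L(u)$ still supplies the correct weight $d_r(s,u)$, and verifying that the modified Dijkstra run recovers $\valp(P)$ up to the arrival-time offset added back in \cref{algoLine:reached2}. The only cosmetic differences are that you fold the paper's separate case of a walk consisting entirely of time-stamp-$t$ arcs into the general argument via an empty prefix (your base case should accordingly allow multi-arc walks within $E_t$ rather than only single arcs, but your inductive-step machinery already covers this).
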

\begin{proof}
The proof is by induction on the time step $t\in \{1,\ldots,T\}$.

In the beginning, $L(v)$ is empty. For $t=1$, the algorithm generates a graph $G=(V_1 \cup \{s\}, E_1)$. 
For all arcs $(s,w) \in E_1$ the weights are set to
$$d_1(s,w) = (\delta_2+ \delta_3) \cdot(T-1) + \delta_4 \cdot c_\lambda((v,w,1))+ \delta_5 \cdot c((v,w,1)) + \delta_6;$$
for the other arcs $(v,w) \in E_1$ the weights are set to $$d_1(v,w)=\delta_4 \cdot c_\lambda(v,w)+ \delta_5 \cdot c((v,w,1)) + \delta_6.$$ 
Note that if there is an optimal temporal walk arriving in time step $1$, then there also exists an optimal temporal path arriving at time step $1$.
Now if there is an optimal path~$P=\left(\left(v_{i-1},v_i,1\right)\right)^k_{i=1}=\left(e_i\right)^k_{i=1}$ from $s$ to a vertex $v \in V$ that arrives exactly in time step~$1$, 
then there exists a path~$P'=\left(\left(v_{i-1},v_i\right)\right)^k_{i=1}=\left(a_i\right)^k_{i=1}$ from~$s$ to~$v$ in~$G$ with value 
\begin{align*}
	\sum_{i=1}^k d_t(a_i) 	&= (\delta_2+ \delta_3) \cdot(T-1) + \sum_{i=1}^k \delta_4 \cdot c_\lambda(v_{i-1},v_i)+ \delta_5 \cdot c(v_{i-1},v_i) + \delta_6\\
							&= \opt_1(v)
\end{align*}
\cref{algo} finds in \texttt{modDijkstra} the path $P'$, adds $(\opt_1(v),1)$ to $L(v)$ and sets 
\begin{align*}
	\opt(v) 	= ~&\delta_1 \cdot 1 - \delta_2 \cdot T + \delta_3 \cdot (1-T) + \opt_1(v)\\
			= ~&\delta_1 \cdot 1 - \delta_2 \cdot T + \delta_3 \cdot (1-T) + (\delta_2+ \delta_3) \cdot(T-1) + \sum_{i=1}^k \delta_4 \cdot c_\lambda(v_{i-1},v_i)\\
				&+ \delta_5 \cdot c(v_{i-1},v_i) + \delta_6 \cdot k\\
			= ~&\delta_1 \cdot 1 - \delta_2\cdot 1 + \delta_3\cdot 0 + \delta_4\cdot \sum_{i=1}^k c_\lambda(e_i) \\
				&+\delta_5\cdot \sum_{i=1}^k c(e_i) + \delta_6\cdot k + \delta_7 \cdot \sum_{i=1}^{k-1} (t_{i+1} - t_i + A(v_{i})) \cdot \ind(v_i)\\
			=	~&\valp(P).	
\end{align*}
Note that~$A(v_i)=0$ for $i \in [k-1]$ by \cref{transformation} in time step $1$. 
If there exists an optimal path~$P^*$ in $G$, then this directly translates to the existence of a temporal path~$P'$ that arrives also in time step $1$ with a smaller optimal value than~$P$, 
contradicting the assumption that $P$ is optimal.

Now, let us assume that for all time steps $t' \in \{1,\ldots,t\}$ \cref{algo} computed the optimal value $\opt$ of a walk from $s$ to $v \in V$ that arrives exactly in time step~$t'$ 
and added $(\opt -\delta_1 \cdot t' + \delta_2 \cdot T - \delta_3 \cdot  (t'-T), t')$ to $L(v)$. 
If for time step $t+1$ a vertex~$v \in V$ has no in-going time-arc with time step~$t+1$, then there cannot exist a temporal walk from~$s$ to~$v$ that arrives exactly in time step~$t+1$. 
Thus, only vertices in $V_{t+1}$ are candidates for a temporal walk that arrives exactly in time step $t+1$.

Let $v \in V_{t+1}$ be a vertex such that there is a temporal walk from $s$ to $v$ that arrives exactly in time step $t+1$. 
Let~$P=\left(\left(v_{i-1},v_i,1\right)\right)^k_{i=1}=\left(e_i\right)^k_{i=1}$ be an optimal walk from~$s$ to~$v$ 
that arrives exactly in time step $t+1$ with the optimal value
\begin{align*}
	\valp(P)	
			&= \delta_1 \cdot t_1 - \delta_2\cdot t_k + \delta_3\cdot (t_k - t_1) + \delta_4\cdot \sum_{i=1}^k c_\lambda(e_i) \\
				&~~~+\delta_5\cdot \sum_{i=1}^k c(e_i) + \delta_6\cdot k + \delta_7 \cdot \sum_{i=1}^{k-1} (t_{i+1} - t_i + A(v_{i})) \cdot \ind(v_i)\\	
\end{align*}
Assume towards a contradiction that \cref{algo} does not find a walk from~$s$ to~$v$ with optimal value $\valp(P)$. 

First consider the case that $t_i = t+1$ for all $i \in [k]$, that is, all time-arcs of the temporal walk $P$ have time stamp $t+1$. Then, we can assume that $P$ is a temporal path. Hence there exists a path~$P'=\left(\left(v_{i-1},v_i\right)\right)^k_{i=1}=\left(a_i\right)^k_{i=1}$ from $s$ to $v$ in~$G_{t+1}$ and therefore in $G$ with optimal value 
\begin{align*}
	\sum_{i=1}^k d_t(a_i) 	&= (\delta_2+ \delta_3) \cdot(T-(t+1)) + \sum_{i=1}^k \delta_4 \cdot c_\lambda(v_{i-1},v_i) + \delta_5 \cdot c(v_{i-1},v_i) + \delta_6\\
							&= \opt_{t+1}(v).
\end{align*}
\cref{algo} finds in \texttt{modDijkstra} the path $P'$, adds $(\opt_{t+1}(v),t+1)$ to $L(v)$ and updates $\opt(v)$ to the minimum of $\opt(v)$ and
\begin{align*}
	 &\delta_1 \cdot (t+1) - \delta_2 \cdot T + \delta_3 \cdot (t+1-T) + \opt_{t+1}(v)\\
			= ~&\delta_1 \cdot (t+1) - \delta_2 \cdot T + \delta_3 \cdot (t+1-T) + (\delta_2+ \delta_3) \cdot(T-(t+1)) \\
				&+ \sum_{i=1}^k \delta_4 \cdot c_\lambda(v_{i-1},v_i) + \delta_5 \cdot c(v_{i-1},v_i) + \delta_6 \cdot k\\
			= ~&\delta_1 \cdot t_1 - \delta_2\cdot t_k + \delta_3\cdot (t_k - t_1) + \delta_4\cdot \sum_{i=1}^k c_\lambda(e_i) \\
				&+\delta_5\cdot \sum_{i=1}^k c(e_i) + \delta_6\cdot k + \delta_7 \cdot \sum_{i=1}^{k-1} (t_{i+1} - t_i + A(v_{i})) \cdot \ind(v_i)\\
			= ~&\valp(P)
\end{align*} 
Note again that~$A(v_i)=0$ for $i \in [k-1]$ if $t_{i-1}=t_i$ by \cref{transformation}.
Hence, we find a walk from $s$ to $v$ at time step $t+1$ with optimal value $\valp(P)$, this is a contradiction to our assumption.

Now assume for $P$ that there exists an $\ell \in \{1,\ldots,k-1\}$ such that for~$j \in [\ell]$ it holds that~$t_j < t+1$ and for~$j' \in \{i+1, \ldots,k\}$ it holds that~$t_{j'} = t+1$. 
The temporal walk~$P_\ell= \left(\left(v_{i-1},v_i,t_i\right)\right)^\ell_{i=1}=\left(e_i\right)^k_{i=1}$ is an optimal subwalk from $s$ to $v_\ell$ that arrives exactly in $t_\ell$, otherwise $P$ is not optimal because it could be improved by replacing $P_\ell$. It has an optimal value 
\begin{align*}
			\valp(P_\ell) = ~& \delta_1 \cdot t_1 - \delta_2\cdot t_\ell + \delta_3\cdot (t_\ell - t_1) + \delta_4\cdot \sum_{i=1}^k c_\lambda(e_i) \\
				& +\delta_5\cdot \sum_{i=1}^\ell c(e_i) + \delta_6\cdot k + \delta_7 \cdot \sum_{i=1}^{\ell-1} (t_{i+1} - t_i + A(v_{i})) \cdot \ind(v_i).
\end{align*}
Then 
\begin{align*}
	\opt_{t_\ell}(v_{\ell}) = ~&\opt_{P_\ell} - \delta_1 \cdot t_\ell + \delta_2 \cdot T - \delta_3 \cdot (t_\ell -T)\\
	= ~&(\delta_2+ \delta_3) \cdot(T-t_\ell) + \delta_4\cdot \sum_{i=1}^\ell c_\lambda(e_i) \\
				&+\delta_5\cdot \sum_{i=1}^\ell c(e_i) + \delta_6\cdot k + \delta_7 \cdot \sum_{i=1}^{\ell-1} (t_{i+1} - t_i + A(v_{i})) \cdot \ind(v_i).				 
\end{align*}
 
By our induction hypothesis, the tuple~$(\opt_{t_\ell}(v_{\ell}),t_\ell)$ was added to~$L(v_\ell)$. 
If the tuple~$(\opt_{t_\ell}(v_{\ell}),t_\ell)$  is not in $L(v_\ell)$ in time step $t+1$, then there must be another tuple~$(\opt_{\hat t}(v_{\ell}),\hat t)$ in $L(v_\ell)$ with $t_\ell < \hat t < t+1 < t_\ell + \beta(v_\ell) \leq \hat t + \beta(v_\ell)$ and 
	$$\opt_{t_\ell}(v_{\ell}) + \delta_7 \cdot \ind(v_\ell) (t+1 - t_\ell + A(v_\ell)) =  \opt_{\hat t}(v_{\ell}) + \delta_7 \cdot \ind(v_\ell) (t+1 - \hat t + A(v_\ell))$$ 
due to \cref{lemma:redundantEl}. Otherwise $P$ is not optimal because it could be improved by replacing~$P_\ell$ by the temporal walk represented by~$(\opt_{\hat t}(v_{\ell}),\hat t)$. 

Now consider the generated graph $G=(V_{t+1} \cup \{s\}, E_{t+1} \cup E_r)$.
The arc sequence~$P_{t+1}=\left(\left(v_{i-1},v_i\right)\right)^k_{i=\ell+1}=\left(a_i\right)^k_{i=\ell+1}$ is a path in~$G_{t+1}=(V_{t+1},E_{t+1})$
and, thus, contained in~$G$. 
The arc~$a_\ell =(s,v_{\ell})$ is contained in $E_r$ with weight $$d_r(s,v_{\ell}) = \opt_{t_\ell}(v_{\ell}) + (t+1 - t_\ell + A(v_\ell)) \cdot \ind(v_\ell).$$ 
Thus, there is a walk from $s$ to $v$ in $G$ and \texttt{modDijkstra} on $G$ returns the vertex $v$ because $a_k \in E_{t+1}$ with an optimal value
\begin{align*}
	d_r(a_\ell) + \sum_{i=\ell+1}^k d_t(a_i) 	= ~&\opt_{t_\ell}(v_i) + \ind(v_\ell)\cdot (t+1 - t_\ell + A(v_\ell)) \\
									&+ \sum_{i=\ell+1}^k \delta_4 \cdot c_\lambda(v_{i-1},v_i,t+1)+ \delta_5 \cdot c(v_{i-1},v_i,t+1) + \delta_6\\
								= ~&(\delta_2+ \delta_3) \cdot(T-t_\ell) + \sum_{i=1}^\ell \big (\delta_4\cdot c_\lambda(e_i) 
									+\delta_5\cdot c(e_i) + \delta_6  \big )\\
									&+ \delta_7 \cdot \sum_{i=1}^{\ell-1} \ind(v_i) \cdot (t_{i+1} - t_i + A(v_{i})) \\
									&+ \ind(v_\ell)\cdot (t+1 - t_\ell + A(v_\ell))\\
									&+\sum_{i=\ell+1}^k \big (\delta_4 \cdot c_\lambda(v_{i-1},v_i)+ \delta_5 \cdot c(v_{i-1},v_i) + \delta_6	\big )	\\
								= ~&(\delta_2+ \delta_3) \cdot(T-t_\ell) + \delta_4\cdot \sum_{i=1}^k c_\lambda(e_i) \\
									&+\delta_5\cdot \sum_{i=1}^k c(e_i) + \delta_6\cdot k + \delta_7 \cdot \sum_{i=1}^{\ell} (t_{i+1} - t_i + A(v_{i})) \cdot \ind(v_i)\\
								= ~&(\delta_2+ \delta_3) \cdot(T-t_\ell) + \delta_4\cdot \sum_{i=1}^k c_\lambda(e_i) \\
									&+\delta_5\cdot \sum_{i=1}^k c(e_i) + \delta_6\cdot k + \delta_7 \cdot \sum_{i=1}^{k-1} (t_{i+1} - t_i + A(v_{i})) \cdot \ind(v_i)\\
								= ~&\opt_{t+1}(v).
\end{align*}
Note that~$A(v_i)=0$ for $i \in [\ell +1,k-1]$ if $t_{i-1}=t_i$ by \cref{transformation}. 
Consequently, the tuple $(\opt_{t+1}(v),t+1)$ is added to $L(v)$ and $\opt(v)$ is set to the minimum of its current value and 
\begin{align*}
 &\delta_1 \cdot (t+1) -\delta_2 \cdot T + \delta_3 (t+1 -T) + \opt_{t+1}(v)\\
		= ~&\delta_1 \cdot (t+1) -\delta_2 \cdot T + \delta_3 (t+1 -T) \\
			&+(\delta_2+ \delta_3) \cdot(T-t_1) + \delta_4 \cdot \sum_{i=1}^k c_\lambda(e_i) \\
									&+\delta_5\cdot \sum_{i=1}^k c(e_i) + \delta_6\cdot k + \delta_7 \cdot \sum_{i=1}^{k-1} (t_{i+1} - t_i + A(v_{i})) \cdot \ind(v_i)\\
		= ~& \delta \cdot (t+1) + \delta_2 \cdot (t_k) + \delta_3 (t+1 -t_1) +\delta_4\cdot \sum_{i=1}^k c_\lambda(e_i) \\
									&+\delta_5\cdot \sum_{i=1}^k c(e_i) + \delta_6\cdot k + \delta_7 \cdot \sum_{i=1}^{k-1} (t_{i+1} - t_i + A(v_{i})) \cdot \ind(v_i)\\
									= ~&\valp(P).
\end{align*} 
This is a contradiction to our assumption.   

Lastly, observe that the algorithm only computes temporal walks that are contained in~$G$ as it only uses arcs from~$G_t$ which all correspond to temporal walks in~$G$ (single arcs or longer walks starting in~$s$).
Thus, for $t\in \{1,\ldots,T\}$, \cref{algo} computes an optimal temporal walk from~$s$ to~$v \in V$ that arrives exactly in time step $t$.    
\end{proof} 
Based on this statement, we can finally prove the correctness of \cref{algo}. 
This concludes the proof of \cref{theorem:algo}.
\begin{lemma}
\cref{algo} computes a \textit{optimal} walk from a source vertex~$s$ to all vertices.
\label{lemma:algoFastestCorrectness}
\end{lemma}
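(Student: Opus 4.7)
My plan is to combine \cref{lemma:algoFastestTimeStep} with an enumeration over all possible arrival times. Fix a vertex $v \in V$. Every temporal walk $P$ from $s$ to $v$ arrives at some time step $t_P \in [T]$, so
\[
\min_{P\colon s \to v} \valp(P) \;=\; \min_{t \in [T]} \; \min_{P\colon s \to v,\; P \text{ arrives at } t} \valp(P).
\]
It therefore suffices to argue that the final value $\opt(v)$ returned by \cref{algo} equals this double minimum, because the outer loop over $t$ combined with the $\min$-update in \cref{algoLine:reached2} performs exactly this enumeration.

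For the upper bound, I would take any optimal walk $P^*$ from $s$ to $v$, let $t^*$ be its arrival time, and invoke \cref{lemma:algoFastestTimeStep} for the iteration at time step $t^*$. That lemma guarantees that $\opt_{t^*}(v)$ is the value of an optimal walk arriving exactly at $t^*$, and the algebra in its proof (in particular the base-case equation $\delta_1 \cdot 1 - \delta_2 T + \delta_3(1-T) + \opt_1(v) = \valp(P)$ and its inductive analogue) shows that $\delta_1 t^* - \delta_2 T + \delta_3(t^*-T) + \opt_{t^*}(v) \leq \valp(P^*)$. Hence after the update at time $t^*$, $\opt(v) \leq \valp(P^*)$; the subsequent $\min$-updates can only decrease it.

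For the matching lower bound, I would show that every update of $\opt(v)$ in \cref{algoLine:reached2} corresponds to the $\valp$-value of an \emph{actual} temporal walk from $s$ to $v$. This requires unpacking the static graph built in \texttt{generateGraph}: an arc in $E_t$ represents a genuine time-arc at time $t$, while a shortcut arc $(s,v') \in E_r$ represents a temporal walk from $s$ to $v'$ that arrived at some $a \in [t-\beta(v'), t]$, which by construction respects the maximum-waiting-time constraint at $v'$. By induction on the number of shortcut arcs used by the $s$-$v$-path returned by \texttt{modDijkstra}, one obtains a bona fide temporal walk in $\mathcal G$ whose $\valp$-value equals $\delta_1 t - \delta_2 T + \delta_3(t-T) + \opt_t(v)$. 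Consequently $\opt(v) \geq \min_P \valp(P)$, and together with the upper bound the claim follows.

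The main subtlety I expect is the careful accounting of where the arrival- and departure-time contributions $\delta_1, \delta_2, \delta_3$ live: they are distributed across the arc-weight construction in \cref{algoLine:gen7}, the quantity $\opt_t(v)$ returned by \texttt{modDijkstra}, and the final update in \cref{algoLine:reached2}. Fortunately this bookkeeping is exactly what was carried out in detail in the proof of \cref{lemma:algoFastestTimeStep}, so the present proof can invoke it wholesale; the remaining work is the essentially trivial observation that taking the minimum over arrival times $t \in [T]$ coincides with the minimum over all $s$-$v$-walks.
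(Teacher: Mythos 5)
Your proposal is correct and follows essentially the same route as the paper: the paper's proof simply observes that a globally optimal $s$-$v$-walk $P$ is in particular optimal among all walks arriving at its own arrival time $t_k$, and then invokes \cref{lemma:algoFastestTimeStep} for time step $t_k$ together with the $\min$-update in \cref{algoLine:reached2}. Your additional lower-bound argument (that every update corresponds to a genuine temporal walk) is left implicit in the paper, being absorbed into the statement of \cref{lemma:algoFastestTimeStep} that the algorithm computes \emph{the} optimal value for each arrival time, so no new idea is needed beyond what you describe.
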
  
\begin{proof}
Let $P=\left(\left(v_{i-1},v_i,1\right)\right)^k_{i=1}=\left(e_i\right)^k_{i=1}$ be a walk with minimum $\valp(P)$ among all temporal walks from $s$ to a vertex~$v$. 
The walk $P$ is also an optimal walk from $s$ to $v$ that arrives exactly in $t_k$.
This is computed by \cref{algo} in time step $t_k$ as shown in \cref{lemma:algoFastestTimeStep}. 
\end{proof}

\section{Experimental Results}
\label{sec:exp}
We implemented \cref{algo} and performed experimental studies including comparisons to existing state-of-the-art algorithms by Wu et al.~\cite{wu2016temporalpath}. 
We show that our algorithm---while being able to solve a much more general problem---can compete with these algorithms on real-world instances when computing temporal walks with no maximum-waiting-time constraints. 
We further examine the influence of different maximum-waiting-time values on the existence and structure (e.g., number of cycles) of optimal temporal walks and on the running time of \cref{algo}. 

\subsection{Setup and Statics}
We implemented \cref{algo} in C++ (v11) and performed our experiments on an Intel Xeon E5-1620 computer with~64\,GB of RAM and four cores clocked at~3.6\,GHz each.
The operating system was Debian GNU/Linux 7.0 where we we compiled the program with GCC v7.3.0 on optimization level~\mbox{-O3}.
We compare \cref{algo} to the algorithms of Wu et al.~\cite{wu2016temporalpath} using their C++ code and testing it on the same hardware and with the same compiler.
We tested our algorithm on the same freely available data sets as Wu et al.~\cite{wu2016temporalpath} from the well-established KONECT library~\cite{KONECT17}.
The graphs are listed in \cref{table:graphData} with some relevant statistics.
\begin{table}[t!]
\caption{Statistics for the real-world data sets used in our experiments (same freely available data sets as Wu et al.~\cite{wu2016temporalpath} from the KONECT library~\cite{KONECT17}). 
}
\centering
\pgfplotstabletypeset[columns={File,n,m,Lifetime},
	col sep=comma,
    columns/File/.style={string type,column name=File,column type = {l}},
	columns/n/.style={column name=$|V|$,precision=1,column type = {r}},
	columns/m/.style={column name=$|E|$,precision=1,column type = {r}},
	columns/Lifetime/.style={column name=$T$,precision=2,column type = {r}},
    every head row/.style ={before row=\toprule, after row=\midrule},
    every last row/.style ={after row=\bottomrule}]{test_environment/csv/graphdata.csv}
\label{table:graphData}
\end{table}%
For each optimization criterion, each $\beta \equiv c, c \in \{1,2,4,8,\ldots,2^{\lceil \log T \rceil}\}$, and each data set, \cref{algo} ran for 100 fixed source vertices of the data set chosen independently and uniformly at random to ensure comparability. 
Our open source code is freely available at \url{https://fpt.akt.tu-berlin.de/temporalwalks}.

\subsection{Findings}
In the following, we first compare \cref{algo} to the algorithm by Wu et al.~\cite{wu2016temporalpath} in terms of running times in our experiments. 
In the second part, we analyze the effect that different maximum-waiting-time values~$\beta$ have on \cref{algo}.

\subsubsection{Comparison} 
When comparing with the algorithms by Wu et al.~\cite{wu2016temporalpath}, we only use the runs with no maximum-waiting-time constraints ($\beta \equiv T$) and we tested all algorithms on the same set of randomly chosen starting vertices.
In the experiments, we could only measure a very small effect of the optimization criteria on the running time. This even holds for linear combinations. The only exception was the computation of~\textit{foremost} which was a bit faster in comparison to the computation of the other criteria.
For this reason we only include two examples here.
We chose foremost and shortest as these are the two criteria where \cref{algo} performed the best and the worst compared to the algorithms by Wu et al.~\cite{wu2016temporalpath}, respectively.
The respective findings are illustrated in the box plots in \cref{boxplots}. 
\pgfplotstableread[col sep = comma]{test_environment/earliest.txt}\sourceEarliest
\pgfplotstableread[col sep = comma]{test_environment/earliest_Wu.txt}\sourceEarliestWu
\pgfplotstableread[col sep = comma]{test_environment/shortest.txt}\sourceShortest
\pgfplotstableread[col sep = comma]{test_environment/shortest_Wu.txt}\sourceShortestWu
\begin{figure}[t]
	\begin{tikzpicture}
		\def\shiftOne{-3.6}
		\def\shiftTwo{-1.2}
		\def\shiftThree{1.2}
		\def\shiftFour{3.6}
		\begin{axis} [
				width=\hsize,
				height=0.45\hsize,
				grid,
				xtick=data,
				symbolic x coords={elec,facebook-wosn-links,epinions,enron,digg-friends,ca-cit-HepPh,youtube-u-growth,dblp-coauthor,flickr-growth,wikipedia-growth},
				ylabel={Time in seconds},
				x tick label style={rotate=40,anchor=east},
				ymode=log,
				legend columns=3,
				legend style={
					at={(0.5,1.05)},
					anchor=south
				}
			]
			\addplot[black,mark=x, only marks] table [x={filename}, y={readin_time}]{\sourceEarliestWu};
			\addlegendentry{Wu et al.'s read-in}
			\addlegendimage{black}
			\addlegendentry{Wu et al.'s \textit{foremost} }			
			\addlegendimage{blue}
			\addlegendentry{Wu et al.'s \textit{shortest} }
			
			\addplot[italyRed,mark=star, only marks] table [x={filename}, y={readin_time}]{\sourceEarliest};
			\addlegendentry{Our read-in}
			\addlegendimage{italyRed}
			\addlegendentry{Our \textit{foremost} }
			\addlegendimage{italyGreen}
			\addlegendentry{Our \textit{shortest} }

			\addplot[shift={(\shiftOne * \boxPlotWidth,0.0)},black,box plot median={runtime_median}{}] table [x={filename}] {\sourceEarliestWu};
			\addplot[shift={(\shiftOne * \boxPlotWidth,0.0)},black,box plot box={runtime_first}{runtime_third}] table [x={filename}] {\sourceEarliestWu};
			\addplot[shift={(\shiftOne * \boxPlotWidth,0.0)},black,box plot box={runtime_first}{runtime_third}] table [x={filename}] {\sourceEarliestWu};
			\addplot[shift={(\shiftOne * \boxPlotWidth,0.0)},black,box plot top whisker={runtime_max}{runtime_third}] table [x={filename}] {\sourceEarliestWu};
			\addplot[shift={(\shiftOne * \boxPlotWidth,0.0)},black,box plot bottom whisker={runtime_min}{runtime_first}] table [x={filename}] {\sourceEarliestWu};


			\addplot[shift={(\shiftTwo * \boxPlotWidth,0.0)},italyRed,box plot median={runtime_median}{}] table [x={filename}] {\sourceEarliest};					
			\addplot[shift={(\shiftTwo * \boxPlotWidth,0.0)},italyRed,box plot box={runtime_first}{runtime_third}] table [x={filename}] {\sourceEarliest};
			\addplot[shift={(\shiftTwo * \boxPlotWidth,0.0)},italyRed,box plot top whisker={runtime_max}{runtime_third}] table [x={filename}] {\sourceEarliest};
			\addplot[shift={(\shiftTwo * \boxPlotWidth,0.0)},italyRed,box plot bottom whisker={runtime_min}{runtime_first}] table [x={filename}]{\sourceEarliest};
         
		
			\addplot[shift={(\shiftThree * \boxPlotWidth,0.0)},blue,box plot median={runtime_median}{}] table [x={filename}] {\sourceShortestWu};
			\addplot[shift={(\shiftThree * \boxPlotWidth,0.0)},blue,box plot box={runtime_first}{runtime_third}] table [x={filename}] {\sourceShortestWu};
			\addplot[shift={(\shiftThree * \boxPlotWidth,0.0)},blue,box plot top whisker={runtime_max}{runtime_third}] table [x={filename}] {\sourceShortestWu};
			\addplot[shift={(\shiftThree * \boxPlotWidth,0.0)},blue,box plot bottom whisker={runtime_min}{runtime_first}] table [x={filename}] {\sourceShortestWu};


			\addplot[shift={(\shiftFour * \boxPlotWidth,0.0)},italyGreen,box plot median={runtime_median}{}] table [x={filename}] {\sourceShortest};					
			\addplot[shift={(\shiftFour * \boxPlotWidth,0.0)},italyGreen,box plot box={runtime_first}{runtime_third}] table [x={filename}] {\sourceShortest};
			\addplot[shift={(\shiftFour * \boxPlotWidth,0.0)},italyGreen,box plot top whisker={runtime_max}{runtime_third}] table [x={filename}] {\sourceShortest};
			\addplot[shift={(\shiftFour * \boxPlotWidth,0.0)},italyGreen,box plot bottom whisker={runtime_min}{runtime_first}] table [x={filename}]{\sourceShortest};

%
		\end{axis}
	\end{tikzpicture}
	\caption{%
		Running time comparison for computing \textit{foremost} and \textit{shortest} walks. For each graph there are four box plots. From left to right these correspond to the following algorithms: \textit{foremost} of Wu et al., our \textit{foremost}, \textit{shortest} of Wu et al., our \textit{shortest}.
		The boxes represent the 25\% to 75\% percentile of running times over the 100 runs for different sources on the respective temporal graph and the line within the boxes illustrates the 50\% percentile (the median). 
		The whiskers on the top and the bottom represent the best and worst running times, respectively.
		We here only depict the running times of the algorithms after the data has been read in and was preprocessed as we use \cref{transformation} to be able to cope with~$\lambda=0$.
		The two plots with the crosses show the running time of reading in the input and preprocessing it.
	}
	\label{boxplots}
\end{figure}
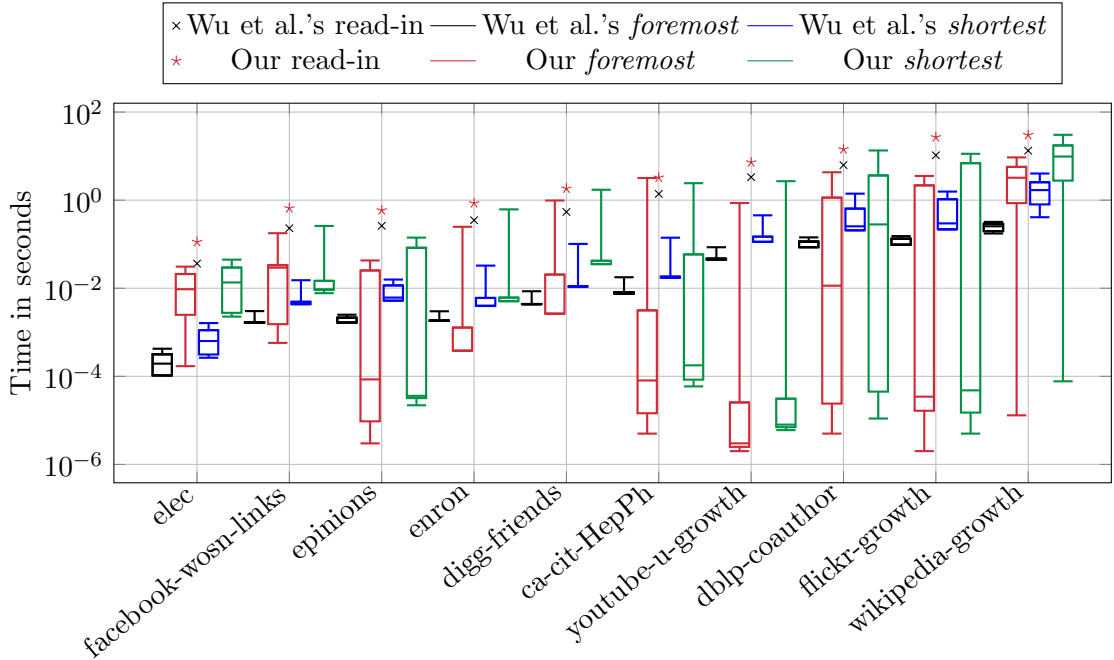%
As one can observe in \cref{boxplots}, \cref{algo} has a larger variance and is therefore more dependent on the choice of starting vertices.
This is due to the fact that \cref{algo} only considers arcs that start in vertices that were already visited while the algorithm by Wu et al.~\cite{wu2016temporalpath} always considers the whole sorted time-arc list and therefore has almost no variance in the running time.
We mention in passing that we observed that even for~$\beta\equiv T$, not all vertices can reach all other vertices by temporal walks in the considered graphs.
If one takes the running time of an average run of each algorithm, that is, the median value of running times, then both algorithms have comparable running times.
If one takes the average running time of each algorithm, then the running time of \cref{algo} is higher than the running time of the algorithm by~Wu et al.~\cite{wu2016temporalpath} by a factor of roughly ten (averaged over all optimization criteria).
Despite the fact that this is a weakness of our algorithm, we believe it to be a valuable contribution as it solves more general problems: it can easily combine multiple optimization criteria and it can cope with maximum waiting times and instantaneous arcs, that is, arcs with~$\lambda = 0$.

When looking at the time to read the data we can observe that our algorithm takes roughly twice to thrice the time for preprocessing.
This is due to the fact that for each edge in the input graph \cref{transformation} constructs a new vertex and a new edge and so the resulting graph is almost thrice the size.
The time to read in the data is much larger than the time of the actual algorithm and so \cref{algo} takes roughly thrice the time of the algorithm by~Wu et al.~\cite{wu2016temporalpath} if preprocessing is taken into account.

Finally, we compared the running time of \cref{algo} with a single optimization criterion against the same algorithm with a linear combination of all criteria considered.
\cref{boxplot:linear} displays the average and median running time for~$\beta\equiv T$ on all considered data sets.
\pgfplotstableread[col sep = comma]{test_environment/shortest_compLin.txt}\sourceShortestLin
\pgfplotstableread[col sep = comma]{test_environment/linear_combination.txt}\sourceLinear
\begin{figure}[t]
	\begin{tikzpicture}
		\begin{axis} [
				width=\hsize,
				height=0.5\hsize,
				grid,
				xtick=data,
				symbolic x coords={elec,facebook-wosn-links,epinions,enron,digg-friends,ca-cit-HepPh,youtube-u-growth,dblp-coauthor,flickr-growth,wikipedia-growth},
				ylabel={ Time in seconds},
				ylabel style={shift={(0.07,.8)}},
				legend cell align=left,
				legend pos=north west,
				x tick label style={rotate=40,anchor=east},
				y tick label style={},
				ymode=log,
				legend cell align=left,
				legend pos=north west,
			]
			\addlegendimage{}
			\addlegendimage{red}
			\addplot[black,mark=x] table [x={filename}, y={runtime_medium}]{\sourceLinear};
			\addplot[black,mark=square] table [x={filename}, y={runtime_median}]{\sourceLinear};
			\addlegendentry{Linear Combination}

			\addplot[red,mark=x] table [x={filename}, y={runtime_medium}]{\sourceShortestLin};
			\addplot[red,mark=square] table [x={filename}, y={runtime_median}]{\sourceShortestLin};
			\addlegendentry{Cheapest}

		\end{axis}
	\end{tikzpicture}
 	\caption{Running time comparison for computing optimal walks with respect to a linear combination of different optimization criteria (black) and with respect to cheapest walks (red) for~${\beta\equiv T}$. The (upper) lines with crosses illustrate the average running time and the (lower) lines with boxes show the median running time.}
 	\label{boxplot:linear}
 \end{figure}
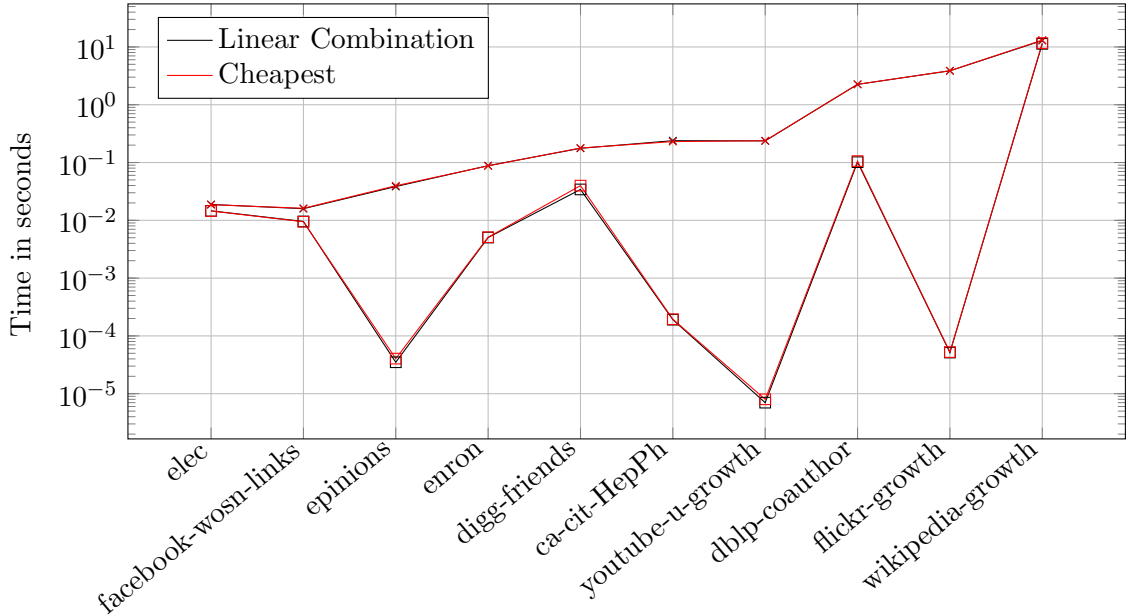%
As expected, the linear combination of optimization criteria does not change the running time compared to a single criterion.

\subsubsection{Effect of different $\beta$-values}
We next analyze the impact that~the maximum-waiting-time constraint $\beta$ has on \cref{algo}.
Decreasing~$\beta$ can have two different effects:
First, it can make temporal walks invalid as the maximum allowed waiting time in a vertex is exceeded.
Thus, with small~$\beta$-values certain vertices can only reach few vertices by temporal walks.
The second effect is that a temporal walk is invalidated but can be fixed by a detour that starts and ends in the vertex in which the maximal waiting time was exceeded.

We first investigate the second effect.
To this end, we partition the optimization criteria in two categories:
The first category contains all optimization criteria for which a detour has no negative effect on the solution.
These are \textit{foremost}, \textit{reverse-foremost}, \textit{fastest}, and \textit{minimum waiting time}.
Since the solution for, e.\,g.,\ \textit{fastest} is only depending on the first and last edge of the temporal walk, adding a cycle somewhere in between does not change the solution.
\textit{Minimum waiting time} plays a special role here as its solution can actually improve by an additional cycle.
The second category contains all other optimization criteria, that is, those for which a detour has a negative effect on the solution.
These are \textit{minimum hop count}, \textit{cheapest}, and \textit{shortest}.
Since we could not measure significant differences for the different optimization criteria within a category, we only display one figure for each category in \cref{fig:beta-impact}.\footnote{We omitted the data sets facebook-wosn-links, flickr-growth, ca-cit-HepPh, and youtube-u-growth in \cref{fig:beta-impact} to keep the figure clear. There are no additional information gains in displaying these data sets.}   

\begin{figure}[!t]
	\centering
	\begin{tikzpicture} 
		\begin{groupplot}[ 
			group style={ 
				group name=my plots, 
				group size=2 by 2, 
				xlabels at=edge bottom, 
				ylabels at=edge left, 
			}, 
			xmax=3000,
			xmin=0.00011,
			ymax=1500000,
			ymin=0.006,
			ymode=log,xmode=log,
			point meta max=5.7,
			point meta min=0.4,
			footnotesize, 
			width=0.5\hsize, 
			height=0.45\hsize, 
			ylabel={\# Cycles (Average)},
			xlabel={$\beta$-value / Lifetime in \%},
			cycle multiindex* list = {scatter src=explicit \nextlist scatter \nextlist black!40},
		] 
		\nextgroupplot[legend columns=3,
					legend style={
						at={(1.1,1.05)},
						anchor=south,
					}]
			\addplot+[mark=x, discard if not={algo}{algo1}] table[col sep=comma,y={cycles_medium}, x={beta_percent}, meta expr=lg10(\thisrow{visit_medium}) ] {test_environment/beta_values_elec.txt};
			\addlegendentry{ elec };
 			\addplot+[mark=square, discard if not={algo}{algo1}] table[col sep=comma,y={cycles_medium}, x={beta_percent}, meta expr=lg10(\thisrow{visit_medium}) ] {test_environment/beta_values_epinions.txt};
 			\addlegendentry{ epinions };
 			\addplot+[mark=triangle, discard if not={algo}{algo1}] table[col sep=comma,y={cycles_medium}, x={beta_percent}, meta expr=lg10(\thisrow{visit_medium}) ] {test_environment/beta_values_enron.txt};
 			\addlegendentry{ enron };
 			\addplot+[mark=diamond, discard if not={algo}{algo1}] table[col sep=comma,y={cycles_medium}, x={beta_percent}, meta expr=lg10(\thisrow{visit_medium}) ] {test_environment/beta_values_digg-friends.txt};
 			\addlegendentry{ digg-friends };
 			 \addplot+[mark=+, discard if not={algo}{algo1}] table[col sep=comma,y={cycles_medium}, x={beta_percent}, meta expr=lg10(\thisrow{visit_medium}) ] {test_environment/beta_values_dblp_coauthor.txt};
 			\addlegendentry{ dblp-coauthor };
 			\addplot+[mark=triangle*, discard if not={algo}{algo1}] table[col sep=comma,y={cycles_medium}, x={beta_percent}, meta expr=lg10(\thisrow{visit_medium}) ] {test_environment/beta_values_wikipedia-growth.txt};
 			\addlegendentry{ wikipedia-growth };
			\coordinate (top) at (rel axis cs:0,1);

		\nextgroupplot 
			\addplot+[mark=x,discard if not={algo}{cheapest}] table[col sep=comma,y={cycles_medium}, x={beta_percent}, meta expr=lg10(\thisrow{visit_medium}) ] {test_environment/beta_values_elec.txt};
			\addplot+[mark=square, discard if not={algo}{cheapest}] table[col sep=comma,y={cycles_medium}, x={beta_percent}, meta expr=lg10(\thisrow{visit_medium}) ] {test_environment/beta_values_epinions.txt};
 			\addplot+[mark=triangle, discard if not={algo}{cheapest}] table[col sep=comma,y={cycles_medium}, x={beta_percent}, meta expr=lg10(\thisrow{visit_medium}) ] {test_environment/beta_values_enron.txt};
 			\addplot+[mark=diamond, discard if not={algo}{cheapest}] table[col sep=comma,y={cycles_medium}, x={beta_percent}, meta expr=lg10(\thisrow{visit_medium}) ] {test_environment/beta_values_digg-friends.txt};
 			 \addplot+[mark=+, discard if not={algo}{cheapest}] table[col sep=comma,y={cycles_medium}, x={beta_percent}, meta expr=lg10(\thisrow{visit_medium}) ] {test_environment/beta_values_dblp_coauthor.txt};
 			\addplot+[mark=triangle*, discard if not={algo}{cheapest}] table[col sep=comma,y={cycles_medium}, x={beta_percent}, meta expr=lg10(\thisrow{visit_medium}) ] {test_environment/beta_values_wikipedia-growth.txt};
		
		\nextgroupplot[ylabel={Running Time (Average)},ymax=40,ymin=0.00001,title=foremost walk] 
			\addplot+[mark=x, discard if not={algo}{algo1}] table[col sep=comma,y={runtime_medium}, x={beta_percent}, meta expr=lg10(\thisrow{visit_medium}) ] {test_environment/beta_values_elec.txt};
 			\addplot+[mark=square, discard if not={algo}{algo1}] table[col sep=comma,y={runtime_medium}, x={beta_percent}, meta expr=lg10(\thisrow{visit_medium}) ] {test_environment/beta_values_epinions.txt};
 			\addplot+[mark=triangle, discard if not={algo}{algo1}] table[col sep=comma,y={runtime_medium}, x={beta_percent}, meta expr=lg10(\thisrow{visit_medium}) ] {test_environment/beta_values_enron.txt};
 			\addplot+[mark=diamond, discard if not={algo}{algo1}] table[col sep=comma,y={runtime_medium}, x={beta_percent}, meta expr=lg10(\thisrow{visit_medium}) ] {test_environment/beta_values_digg-friends.txt};
 			 \addplot+[mark=+, discard if not={algo}{algo1}] table[col sep=comma,y={runtime_medium}, x={beta_percent}, meta expr=lg10(\thisrow{visit_medium}) ] {test_environment/beta_values_dblp_coauthor.txt};
 			\addplot+[mark=triangle*, discard if not={algo}{algo1}] table[col sep=comma,y={runtime_medium}, x={beta_percent}, meta expr=lg10(\thisrow{visit_medium}) ] {test_environment/beta_values_wikipedia-growth.txt};

		\nextgroupplot[ymax=40,ymin=0.00001,title=cheapest walk]
			\addplot+[mark=x, discard if not={algo}{cheapest}] table[col sep=comma,y={runtime_medium}, x={beta_percent}, meta expr=lg10(\thisrow{visit_medium}) ] {test_environment/beta_values_elec.txt};
 			\addplot+[mark=square, discard if not={algo}{cheapest}] table[col sep=comma,y={runtime_medium}, x={beta_percent}, meta expr=lg10(\thisrow{visit_medium}) ] {test_environment/beta_values_epinions.txt};
 			\addplot+[mark=triangle, discard if not={algo}{cheapest}] table[col sep=comma,y={runtime_medium}, x={beta_percent}, meta expr=lg10(\thisrow{visit_medium}) ] {test_environment/beta_values_enron.txt};
 			\addplot+[mark=diamond, discard if not={algo}{cheapest}] table[col sep=comma,y={runtime_medium}, x={beta_percent}, meta expr=lg10(\thisrow{visit_medium}) ] {test_environment/beta_values_digg-friends.txt};
 			 \addplot+[mark=+, discard if not={algo}{cheapest}] table[col sep=comma,y={runtime_medium}, x={beta_percent}, meta expr=lg10(\thisrow{visit_medium}) ] {test_environment/beta_values_dblp_coauthor.txt};
 			\addplot+[mark=triangle*, discard if not={algo}{cheapest}] table[col sep=comma,y={runtime_medium}, x={beta_percent}, meta expr=lg10(\thisrow{visit_medium}) ] {test_environment/beta_values_wikipedia-growth.txt};
			\coordinate (bot) at (rel axis cs:1,0);
		\end{groupplot}
		
		
		\path (top|-current bounding box.north)--
                    coordinate(legendposabove)
                    (bot|-current bounding box.north);
		\begin{axis}[%
			hide axis,
			scale only axis,
			height=.001\hsize,
			width=0.85\hsize,
			at={(legendposabove.south west)},
			yshift=1.25cm,
			anchor=south,
			point meta max=5.7,
			point meta min=0.4,
			colorbar horizontal,                  
			colorbar style={
				xtick={1,2,...,5},
				scaled ticks= true,
				xticklabel pos=upper,
				xlabel={Visited Vertices (Average)},
				xlabel style={
					yshift=6em
				},
				xticklabel={ $10^{\pgfmathprintnumber{\tick}}$},
			},
			]
			\addplot [draw=none] coordinates {(0,0)};
		\end{axis}
	\end{tikzpicture} 
	\caption{Impact of different $\beta$-values on the number of cycles, running time, and on the number of vertices that can be reached by temporal walks from the chosen starting vertices.
	All plots use the same color bar.
	Two plots on left side: results for a \textit{foremost} walk. Two plots on right side: results for a \textit{cheapest} walk.
	}
	\label{fig:beta-impact}
\end{figure}
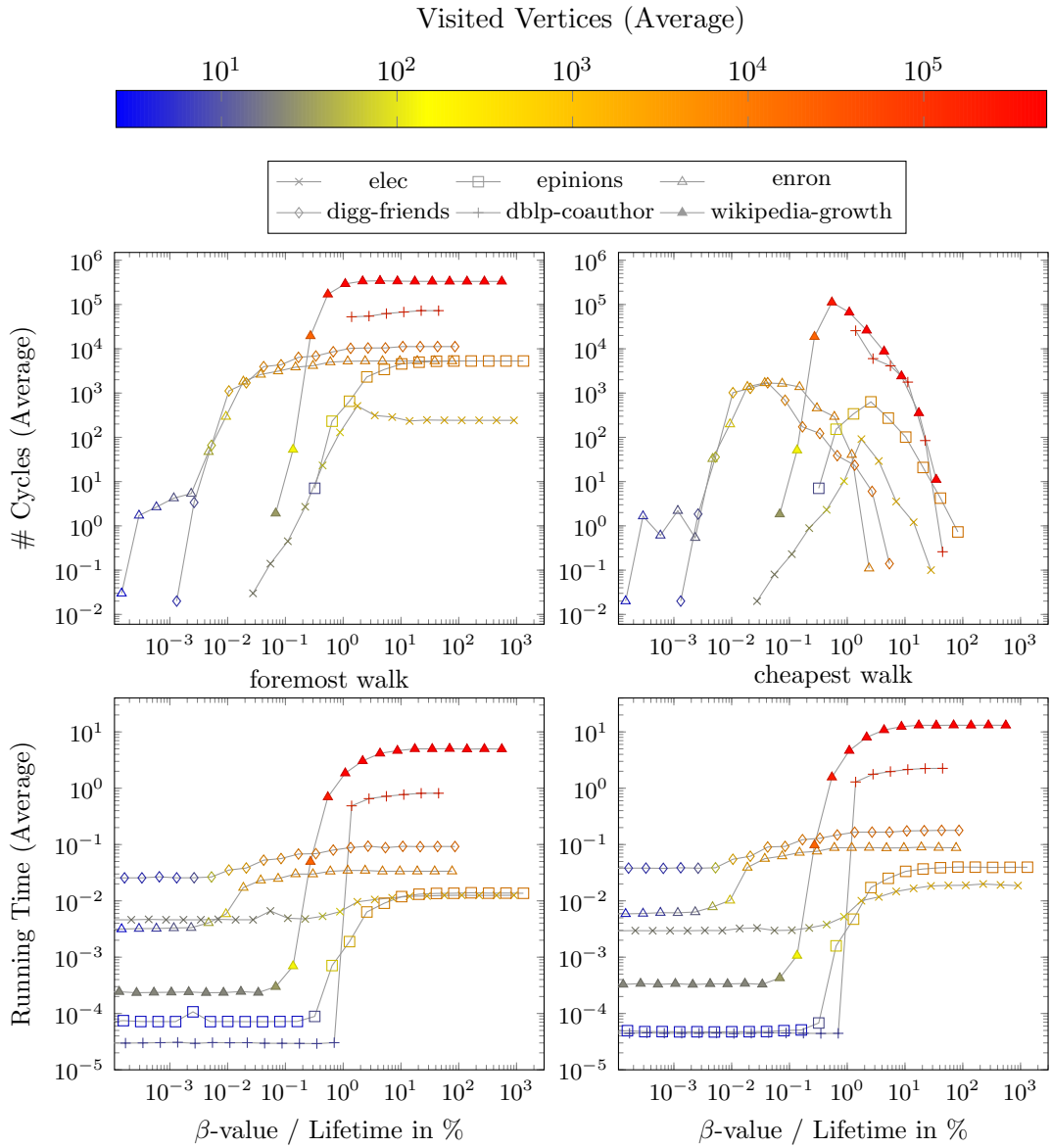

We remark that in the first category we implemented the algorithm such that cycles, which can be used but can also be omitted, are kept in the solution.
Hence \cref{fig:beta-impact} (top left plot) displays values close to the upper bound on the number of cycles in an optimal solution.

\cref{fig:beta-impact} (two bottom plots) show that the different categories behave very similarly when it comes to the running-time dependence on the value of~$\beta$. 
It seems to be more likely that the first effect we described in the beginning (that decreasing $\beta$-values can make temporal walks invalid as the maximum allowed waiting time in a vertex is exceeded) is more important for explaining the running times.
With very small~$\beta$-values, a vertex can only reach few other vertices and hence only few edges are considered by \cref{algo}.
With increasing~$\beta$-values, there seems to be a critical value (around $0.1\%-10\%$ of the lifetime of the temporal graph) where suddenly much more connections appear and hence the running time increases drastically.
This observation is affirmed by \cref{fig:connectedVertices}, which shows that (almost) independently of the input graph, the running time is linearly depending on the number of vertices that are visited. 
 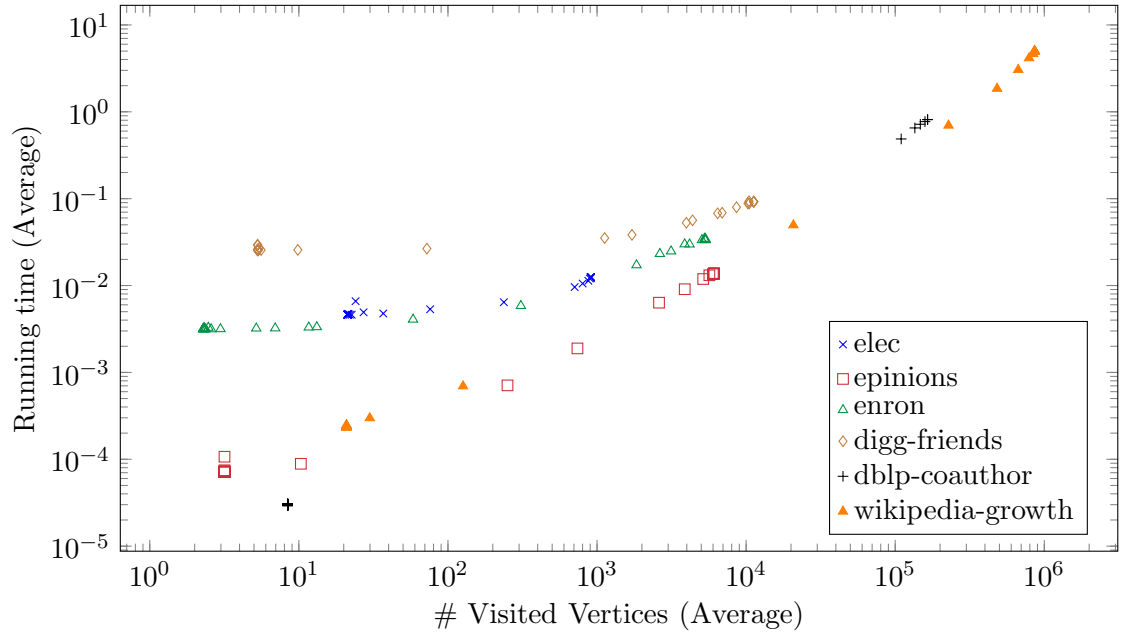
\begin{figure}[!t]
	\begin{tikzpicture}
		\begin{loglogaxis}[
				width=\hsize,
				height=0.6\hsize,
				ylabel={Running time (Average)},
				xlabel={\# Visited Vertices (Average)},
				legend cell align=left,
				legend pos=south east,
				cycle list = {blue,italyRed,italyGreen,brown,black,orange},
			]
			\addplot+[only marks,mark=x, discard if not={algo}{algo1}] table[col sep=comma,y={runtime_medium}, x={visit_medium}] {test_environment/beta_values_elec.txt}; 
			\addlegendentry{elec};
 			\addplot+[only marks,mark=square, discard if not={algo}{algo1}] table[col sep=comma,y={runtime_medium}, x={visit_medium}] {test_environment/beta_values_epinions.txt};
 			\addlegendentry{epinions};
 			\addplot+[only marks,mark=triangle, discard if not={algo}{algo1}] table[col sep=comma,y={runtime_medium}, x={visit_medium}] {test_environment/beta_values_enron.txt};
 			\addlegendentry{enron};
 			\addplot+[only marks,mark=diamond, discard if not={algo}{algo1}] table[col sep=comma,y={runtime_medium}, x={visit_medium}] {test_environment/beta_values_digg-friends.txt};
 			\addlegendentry{digg-friends};
			\addplot+[only marks,mark=+, discard if not={algo}{algo1}] table[col sep=comma,y={runtime_medium}, x={visit_medium}] {test_environment/beta_values_dblp_coauthor.txt};
 			\addlegendentry{dblp-coauthor};
 			\addplot+[only marks,mark=triangle*, discard if not={algo}{algo1}] table[col sep=comma,y={runtime_medium}, x={visit_medium}] {test_environment/beta_values_wikipedia-growth.txt};
 			\addlegendentry{wikipedia-growth};
		\end{loglogaxis}
	\end{tikzpicture}
	\caption{Average number of visited vertices and its influence on the average running time for \textit{foremost}.}
	\label{fig:connectedVertices}
\end{figure}%
We believe that the difference for small~$\beta$-values comes from the initialization which is again more depending on the input graph.
This would also confirm our explanation why our algorithm has a higher variance in running time compared to the algorithm by~\cite{wu2016temporalpath}.

\section{Conclusion}
\label{sec:conclusion}
Building on and widening previous work of Wu et al.~\cite{wu2016temporalpath}, we provided a 
theoretical and experimental study of computing optimal temporal walks under 
waiting-time constraints. 
The performed experiments indicate the practical relevance of our approach. 
As to future challenges, recall that moving from walks to paths 
would yield NP-hard optimization problems~\cite{casteigts}. Hence, for the path scenario 
the study of approximation, fixed-parameter, or heuristic algorithms is a natural next step.  
For the scenario considered in this work, note that we did not study the natural extension to 
Pareto-optimal walks (under several optimization criteria).
Moreover, for (temporal) network centrality measures based on shortest paths and walks,
counting or even listing \emph{all} temporal walks or paths would be of interest.

\paragraph*{Acknowledgment}
We thank Fabian Jacobs for his programming work helping to enable our experimental studies
and the anonymous reviewers of \emph{COMPLEX NETWORKS 2019} for their constructive feedback.
\bibliographystyle{abbrvnat}
\bibliography{literature}

\end{document}